\journalname{Journal of Scheduling}
\begin{document}

\title{Maximizing Online Utilization with Commitment\thanks{A preliminary version of the preemptive part of this paper was published at the European Symposium of Algorithms ESA 2016~\cite{SchwiegelshohnS16}}
}



\author{Chris Schwiegelshohn         \and
        Uwe Schwiegelshohn
}

\authorrunning{Schwiegelshohn, C. and Schwiegelshohn, U.} 

\institute{Chris Schwiegelshohn \at
              Sapienza, University of Rome, Italy \\
              \email{schwiegelshohn@diag.uniroma1.it}           \\
           \and
           Uwe Schwiegelshohn \at
              TU Dortmund University, Germany \\
              \email{uwe.schwiegelshohn@tu-dortmund.de}
}

\date{Received: date / Accepted: date}

\maketitle

\begin{abstract}
We investigate online scheduling with commitment for parallel identical machines. Our objective is to maximize the total processing time of accepted jobs. As soon as a job has been submitted, the commitment constraint forces us to decide immediately whether we accept or reject the job. Upon acceptance of a job, we must complete it before its deadline $d$ that satisfies $d \geq (1+\varepsilon)\cdot p + r$, with $p$ and $r$ being the processing time and the submission time of the job, respectively while $\varepsilon>0$ is the slack of the system. Since the hard case typically arises for near-tight deadlines, that is $\varepsilon\rightarrow 0$, we consider $\varepsilon\leq 1$. We use competitive analysis to evaluate our algorithms. While there are simple online algorithms with optimal competitive ratios for the single machine model, little is known for parallel identical machines. 

Our first main contribution is a deterministic preemptive online algorithm with an almost tight competitive ratio on any number of machines. For a single machine, the competitive factor matches the optimal bound $\frac{1+\varepsilon}{\varepsilon}$ of the greedy acceptance policy. Then the competitive ratio improves with an increasing number of machines and approaches $(1+\varepsilon)\cdot\ln \frac{1+\varepsilon}{\varepsilon}$ as the number of machines converges to infinity. This is an exponential improvement over the greedy acceptance policy for small $\varepsilon$. 

In the non-preemptive case, we present a deterministic algorithm on $m$ machines with a competitive ratio of $1+m\cdot \left(\frac{1+\varepsilon}{\varepsilon}\right)^{\frac{1}{m}}$.
This matches the optimal bound of  $2+\frac{1}{\varepsilon}$ of the greedy acceptance policy for a single machine while it again guarantees an exponential improvement over the greedy acceptance policy for small $\varepsilon$ and large $m$. In addition, we determine an almost tight lower bound that approaches $m\cdot \left(\frac{1}{\varepsilon}\right)^{\frac{1}{m}}$ for large $m$ and small $\varepsilon$.
\keywords{Online scheduling \and commitment \and competitive analysis}
\subclass{68W27 \and 68W40}
\end{abstract}

\section{Introduction}
\label{intro}
We consider an online resource allocation and scheduling problem modeled as follows. Given a set of $m$ identical machines, we process a sequence of jobs in an online fashion. Each job has a release date $r_j$, a deadline $d_j$, and a processing time $p_j$. Our objective is to maximize the total utilization of the system, that is the sum of the processing times of all completed jobs, subject to a \emph{commitment} guarantee. Due to this guarantee we must decide immediately upon submission of a job, whether to accept or reject it. In case of acceptance, we must complete it before its deadline and cannot accept any other job that would prevent us from doing so.

This problem has received a significant amount of attention and tight algorithms with competitive ratios depending on $\Delta=\log \frac{\max p_j}{\min p_j}$ are known for a single machine without further assumptions on the input~\cite{LiT94,GPS00}. To obtain more meaningful results, a number of papers use a slack parameter $\varepsilon>0$ and require $d_j\geq (1+\varepsilon)\cdot p_j + r_j$ for all jobs. 
Typically, the hard case arises for small slack ($\varepsilon\rightarrow 0$).

We consider the preemptive and the non-preemptive versions of the problem for small slack values $\varepsilon\leq 1$. In case of preemption, we allow migration and assume no penalty for preemption and migration. 

We use the common approach of competitive analysis to measure the performance of online algorithms, that is, we determine the competitive ratio. Since we are facing a maximization problem, the strict competitive ratio is the maximum value of $\frac{\text{OPT}({\mathcal{J}})}{\text{ALG}({\mathcal{J}})}$ over all input sequences ${\mathcal{J}}$, with $\text{OPT}({\mathcal{J}})$ and $\text{ALG}({\mathcal{J}})$ being the objective values of an optimal offline algorithm and the examined online algorithm, respectively.

\subsection{Our Contribution}

We show upper and lower bounds on the competitive ratio of utilization maximization with commitment in dependence of the slackness parameter $\varepsilon$. Specifically:
\begin{itemize}
\item For $m$ machines supporting preemption and migration, we give a deterministic algorithm with a strict competitive ratio of
\[m\cdot (1+\varepsilon)\cdot \left(\left(\frac{1+\varepsilon}{\varepsilon} \right)^{\frac{1}{m}}-1 \right).\]
This bound decreases with increasing $m$, approaching $(1+\varepsilon)\cdot \ln(\frac{1+\varepsilon}{\varepsilon})$ as $m$ tends to infinity and matching the competitive ratio $\frac{1+\varepsilon}{\varepsilon}$ of the optimal greedy algorithm for $m=1$.
\item We provide an almost matching lower bound of 
\[\lfloor m\cdot (1+\varepsilon)\rfloor\cdot \left(\left(\frac{1+\varepsilon}{\varepsilon} \right)^{\frac{1}{m}}-1 \right)
.\]
\item For $m$ machines without preemption, we give a deterministic algorithm with a strict competitive ratio of 
\[m \cdot\left(\frac{1+\varepsilon}{\varepsilon}\right)^{\frac{1}{m}}+1. \]
For $ m> \ln{\frac{1+\varepsilon}{\varepsilon}}$, we apply partitioning to improve this competitive ratio. In particular, we obtain a competitive ratio $e\ln \frac{1+\varepsilon}{\varepsilon} + 1$  for integral values $ \ln \frac{1+\varepsilon}{\varepsilon}$ and $m=i\cdot \ln \frac{1+\varepsilon}{\varepsilon}$, $i\in \mathbb{N}$. For small slack $\varepsilon$, these results significantly improve over the state of the art greedy algorithm with a competitive ratio of $2+\frac{1}{\varepsilon}$.
\item For the non-preemptive machine environment, we also give the first lower bounds  that are asymptotically greater than $1$ for any number of machines. In particular, for large $m$ and $\varepsilon\leq 1$, our lower bound approaches 
\[m\cdot \left(\frac{1}{\varepsilon} \right)^{\frac{1}{m}}\] 
which is almost tight (up to a $(1+\varepsilon)^{1/m}$ factor and the additive value 1).
\end{itemize}

\subsection{Related Work}
Using the three field notation of Graham et al.~\cite{Graham}, the $P|pmtn,r_j|\sum (1-U_j)w_j$ problem admits a pseudopolynomial time algorithm, see Lawler~\cite{Lawler90}, and a FPTAS due to a technique by Pruhs and Woeginger~\cite{PruhsW07}. We are not aware of any simplifications when assuming $w_j=p_j$. For preemption without migration, these ideas were further extended to admit a $2+\varepsilon$ approximation~\cite{AwerbuchAFLR01}. A reduction from preemptive to non-preemptive due to Kalyanasundaram and Pruhs~\cite{KaP01} yields a $6+\varepsilon$ approximation. Recently, Alon et al.~\cite{AAB18} analyzed the gap between schedules with a bounded number of $k$ preemptions and an unbounded number of preemptions. They show that the utilization ratio is at most $O(\log_{k+1} n, \log_{k+1} \frac{\max p_j}{\min p_j})$, which is asymptotically optimal.

Variants of the online problem have been studied in real-time scheduling. Without commitment and assuming preemptions, there exists a tight $4$-competitive deterministic algorithm on a single machine for a class of well-behaved payoff functions including $w_j = p_j$, see, for instance, Baruah et al.~\cite{BKMMRRSW92} and W\"{o}ginger~\cite{Woe94}.  Kalyanasundaram and Pruhs~\cite{KaP03} gave a $O(1)$ competitive randomized algorithm for maximizing the number of completed jobs and showed that no constant competitive deterministic algorithm exists.
For general pay-off functions in a non-preemptive environment, several authors~\cite{CaI98,KoS94,Por04} showed finite competitive ratios of $O(\min(\log \frac{\max p_j}{\min p_j},\log \frac{\max w_j}{\min w_j})$ if $d_j=r_j+p_j$ holds. For slack\footnote{In literature, the stretch $f = 1+\varepsilon$ is often used instead. The two notions are equivalent.} $\varepsilon>0$, Lucier et al.~\cite{LMNY13} proposed an algorithm with competitive ratio $2+\Theta\left(\frac{1}{\sqrt[3]{1+\varepsilon}-1}\right) +  \Theta\left(\frac{1}{(\sqrt[3]{1+\varepsilon}-1)^2}\right)$.

Commitments make the problem much more difficult. Indeed, Lucier et al.~\cite{LMNY13} showed that even with slack, no performance guarantees are possible for general weights $w_j$. By proposing a model with limited commitment called $\beta$-responsive, Azar et al.~\cite{AKLMNY15} gave an algorithm with competitive ratio $5+\Theta\left(\frac{1}{\sqrt[3]{1+\varepsilon}-1}\right) +  \Theta\left(\frac{1}{(\sqrt[3]{1+\varepsilon}-1)^2}\right)$ for $\varepsilon>3$. Similar relaxed commitment models were recently considered by Chen et al.~\cite{CEMSS18} for the problem of maximizing the number of accepted jobs $(w_j=1)$. For $w_j=p_j$ (the utilization maximization problem), a number of algorithms with commitment have been proposed. On a single machine without assuming slack or preemption, Lipton and Tomkins~\cite{LiT94} gave a lower bound of $\Omega (\log \Delta)$ for the competitive factor of any randomized online algorithm. Additionally assuming $d_j=p_j+r_j$ for all jobs, they obtained an upper bound of $O(\log^{1+\delta} \Delta)$ for any $\delta>0$. For arbitrary deadlines, Goldman, Parwatikar, and Suri~\cite{GPS00} presented a $6(\lceil \log \Delta\rceil +1)$ competitive algorithm. 

Baruah and Haritsa~\cite{BaH97} were among the first to present results for problems with slack when they addressed the single machine problem with preemption and commitment. For this problem, they gave an algorithm with a competitive ratio $1+\frac{1}{\varepsilon}$ and a lower bound of $1+\frac{1}{\lceil 1+ \varepsilon \rceil}$. For the corresponding non-preemptive problem on a single machine, Goldwasser's algorithm~\cite{Gol99,Gol03} guarantees a tight deterministic competitive ratio of $2+\frac{1}{\varepsilon}$.  
A similar competitive ratio is also given by Garay et al.~\cite{GNYZ02}.

For parallel identical machines, DasGupta and Palis~\cite{DaP01} consider preemption without migration and suggest a simple greedy algorithm. They claim the same competitive factor as for the single machine problem with preemption. They also give a lower bound of $1+\frac{1}{m\varepsilon}$. For the non-preemptive case, Kim and Chwa~\cite{KiC01} apply Goldwasser's algorithm~\cite{Gol99,Gol03} to parallel identical machines. Lee~\cite{Lee03} considers a model without commitment and presents a non-preemptive algorithm that adds many newly submitted jobs to a queue and decides later whether to schedule or to reject them. He claims a competitive ratio of $m+1+m\varepsilon^{-1/m}$ for small slack values. This competitive ratio is slightly worse than our competitive ratio for the same problem with commitment. 

\section{Notations and Basic Properties}
\label{sec:notations}

We schedule jobs of a job sequence $\mathcal{J}$ on $m$ \textit{parallel identical machines} ($P_m$ model). \textit{Release date} $r_j$,  \textit{processing time} $p_j$, and \textit{deadline} $d_j$ of a job $J_j\in {\mathcal{J}}$ obey the \textit{slack condition} $d_j-r_j\geq (1+\varepsilon)\cdot p_j$. $\varepsilon>0$ is the fixed \textit{slack} factor of the system. We say that $J_j$ has a \textit{tight slack} if $d_j-r_j=(1+\varepsilon)\cdot p_j$ holds. In the three field notation, the constraint $\varepsilon$ indicates a problem with slack factor $\varepsilon$. We are interested in problems with a small slack ($\varepsilon\leq 1$). The job sequence $\mathcal{J}$ defines an order of the jobs such that job $J_i$ precedes job $J_j$ for $r_i<r_j$.   

We discuss an \textit{online scenario with commitment}: the jobs arrive in sequence order and we must decide whether to accept the most recent job or not before seeing any job at a later position in the sequence. Complying with the common notation in scheduling theory, we use the binary variable $U_j$ to express this decision for job $J_j$: $U_j=0$ denotes acceptance of job $J_j$ while the rejection of $J_j$ produces $U_j=1$. We commit ourselves to complete every accepted job not later than its deadline and say that a schedule is \textit{valid} if it satisfies this commitment. Therefore, we can only accept a new job if there is a \textit{valid} extension of the current schedule that includes this job. We consider \textit{preemptive} and \textit{non-preemptive} variants of this problem. If the system allows preemption (\textit{pmtn}) then we can interrupt the execution of any job and immediately or later resume it on a possibly different machine without any penalty. In the non-preemptive case, we combine the acceptance commitment with a machine and start time commitment (immediate decision, see Bunde and Goldwasser~\cite{BuG10}). 

For a previously released job, we only consider its remaining processing time when generating or modifying a preemptive schedule. Therefore, progression of time may change the values of variables that depend on the processing time. For such a variable, we use the appendix $|_t$ to indicate its value at time $t$.

It is our goal to maximize the total processing time of all accepted jobs ($\sum p_j\cdot (1-U_j)$). An online algorithm $ALG$ determines $U_j(ALG)$ for all jobs in $\mathcal{J}$ such that there is a valid schedule $S$ for all jobs $J_j$ with $U_j(ALG)=0$.
$P[t,t')(S,{\mathcal{J}})$ denotes the total processing time of schedule $S$ for job sequence ${\mathcal{J}}$ in interval $[t,t')$ while $P^*[t,t')({\mathcal{J}})$ is the maximum total processing time of any optimal schedule for the same job sequence in the same interval. We evaluate our algorithm by determining bounds for the competitive ratio, that is the largest ratio between the optimal objective value and the objective value produced by online algorithm $ALG$ for all possible job sequences $\mathcal{J}$: $ALG$ has a strict competitive ratio $c_{ALG}$ if $c_{ALG}\geq \frac{P^*[0,\max_{J_j\in\mathcal{J}}d_j)({\mathcal{J}})}{P[0,\max_{J_j\in\mathcal{J}}d_j)(S,{\mathcal{J}})}$ holds for any job sequence $\mathcal{J}$. Finally, $c$ is a lower bound of the competitive ratio if $c\leq c_{ALG}$ holds for any online algorithm $ALG$.

\section{Online Scheduling with Preemption}
\label{sec:pmtn}

We present an approach for solving our problem on parallel identical machines with preemption. Contrary to previous approaches, our algorithm is based on lazy acceptance, that is, we may not accept a job although its acceptance allows a valid schedule. We show that our lazy acceptance leads to better competitive ratios than the greedy acceptance approach.

In this section, we frequently use function $\left( \frac{1+\varepsilon}{\varepsilon}\right)^x$ with the identity
\begin{align} 
\label{eq:func_add}
\sum_{j=i+1}^{m+i}\left( \frac{1+\varepsilon}{\varepsilon}\right)^{\frac{j}{m}} &= \sum_{j=i}^{m+i-1}\left( \frac{1+\varepsilon}{\varepsilon}\right)^{\frac{j}{m}}+\frac{1}{\varepsilon} \cdot \left( \frac{1+\varepsilon}{\varepsilon}\right)^{\frac{i}{m}}
\end{align} 
and the threshold expression
\begin{align}
\label{eq:min_parameter}
f(m,\varepsilon) & := \frac{\varepsilon}{1+\varepsilon}\cdot \sum_{j=0}^{m-1}\left( \frac{1+\varepsilon}{\varepsilon}\right)^{\frac{j}{m}} = \frac{1}{1+\varepsilon}\cdot \frac{1}{\left(\frac{1+\varepsilon}{\varepsilon}\right)^{\frac{1}{m}}-1}.
\end{align}

We start by recalling a known aggregation function for a sequence $\mathcal{J}$ of jobs that are submitted at time $t$ or earlier but not yet completed ($p_j|_t>0$): $V_{min}(\tau)|_t$ for $\tau\geq t$ is the minimum total processing time that we must execute in interval $[t,\tau)$ of any valid schedule.
\begin{align*}
V_{min}(\tau)|_{t} & := \sum_{J_j\in \mathcal{J}} \left\{ 
\begin{array}{lcl}
0 &  \mbox{ for}  & d_j-p_j|_{t}\geq \tau \\
p_j|_{t} &  \mbox{ for}  & \tau\geq d_j\\
\tau-(d_j-p_j|_{t}) &  \mbox{ else} \\
\end{array}
\right.
\end{align*}
Based on function $V_{min}$, Horn~\cite{Hor74} has given a necessary and sufficient condition for the existence of a valid preemptive schedule:
\begin{theorem}[Horn] 
\label{thm:horn}
There is a valid preemptive schedule in interval $[t,\max_{J_j\in\mathcal{J}}d_j)$ for a sequence $\mathcal{J}$ of jobs with deadlines on $m$ parallel identical machines if and only if we have  
\begin{align}
\nonumber
d_j & \geq t+p_j|_{t} \mbox{ for every job } J_j \mbox{ with } p_j|_t\geq 0 \mbox{ and }\\
\label{eq:schedulability}
V_{min}(\tau)|_{t} & \leq (\tau-t) \cdot m \mbox{ for every } \tau\in (t,\max_{J_j\in\mathcal{J}}d_j]. 
\end{align}
\end{theorem}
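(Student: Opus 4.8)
I would prove necessity and sufficiency separately: necessity is a short accounting argument, and the substance is in sufficiency, which I would obtain by passing through a transportation (max-flow) feasibility argument and then McNaughton's wrap-around rule. For necessity, suppose a valid preemptive schedule $S$ on $[t,\max_{J_j\in\mathcal{J}}d_j)$ exists. Each uncompleted job $J_j$ still needs $p_j|_t$ units and must finish by $d_j$, which forces $d_j\ge t+p_j|_t$. For \eqref{eq:schedulability}, fix $\tau\in(t,\max_{J_j\in\mathcal{J}}d_j]$ and bound, job by job, the processing $S$ is \emph{obliged} to perform inside $[t,\tau)$: a job with $d_j\le\tau$ must be completed by $\tau$ and contributes its full $p_j|_t$; a job with $d_j-p_j|_t\ge\tau$ may be deferred entirely past $\tau$ and contributes nothing; and a job with $d_j-p_j|_t<\tau<d_j$ has only $d_j-\tau$ time left after $\tau$, so at least $p_j|_t-(d_j-\tau)=\tau-(d_j-p_j|_t)$ of it must already be done. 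Summing these lower bounds gives exactly $V_{min}(\tau)|_t$, while $S$ can perform at most $(\tau-t)\cdot m$ units of work in $[t,\tau)$; hence \eqref{eq:schedulability}.

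\textbf{Sufficiency.} Assume both conditions hold. Let $t=\tau_0<\tau_1<\dots<\tau_k=\max_{J_j\in\mathcal{J}}d_j$ be the sorted distinct elements of $\{t\}\cup\{d_j\mid J_j\in\mathcal{J}\}\cup\{d_j-p_j|_t\mid J_j\in\mathcal{J}\}$, set $\ell_i=\tau_i-\tau_{i-1}$, and call $E_i=[\tau_{i-1},\tau_i)$ the $i$-th elementary interval. I would first produce a \emph{fractional} assignment $x_{ji}\ge 0$ (the amount of $J_j$ run during $E_i$) satisfying $\sum_i x_{ji}=p_j|_t$, $x_{ji}=0$ whenever $\tau_i>d_j$, $x_{ji}\le\ell_i$ (no job occupies two machines at once), and $\sum_j x_{ji}\le m\ell_i$ (an interval offers only $m\ell_i$ machine-time). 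This is a transportation feasibility problem on the bipartite jobs-versus-intervals network; by max-flow/min-cut it is feasible iff every $s$--$d$ cut has capacity at least $\sum_j p_j|_t$, and an uncrossing argument shows the binding cuts are those induced by the deadline breakpoints $\tau_i$. Since a job with $d_j>\tau_i$ can offload at most $d_j-\tau_i$ of its work to intervals after $\tau_i$, the work such a cut forces into $[t,\tau_i)$ is exactly the piecewise-linear quantity $V_{min}(\tau_i)|_t$, which by \eqref{eq:schedulability} is at most the capacity $m(\tau_i-t)$; together with $d_j\ge t+p_j|_t$ this is precisely conditions (1)--(2), so the fractional assignment exists. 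Then, within each $E_i$, I would realize the numbers $x_{ji}$ by McNaughton's wrap-around: place the jobs one after another on the machines, wrapping to a fresh machine whenever one reaches length $\ell_i$; because $x_{ji}\le\ell_i$ and $\sum_j x_{ji}\le m\ell_i$, no job is ever split across two machines at the same instant and no machine overflows. Concatenating these blocks over all $i$ gives a schedule in which no job is processed after its deadline and every accepted job is completed, i.e.\ a valid preemptive schedule.

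\textbf{Main obstacle.} The crux is the sufficiency direction, and within it the claim that the transportation min-cut collapses exactly onto \eqref{eq:schedulability}: one has to argue (by an exchange/uncrossing argument on cuts) that it suffices to test the prefixes cut out by the breakpoints $\tau_i$, and then verify that the forced load on such a prefix is the $V_{min}$ expression rather than something larger --- the partial-job terms $\tau_i-(d_j-p_j|_t)$ in the definition of $V_{min}$ are exactly what makes this come out tight. The necessity direction and the McNaughton realization step are routine by comparison, so I would budget most of the effort for the cut analysis.
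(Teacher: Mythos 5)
The paper does not prove this statement at all: it is used as a black box, with the proof deferred to Horn's 1974 paper (Theorem~2 there). So there is nothing in the paper to match your argument against; what you have written is a self-contained reconstruction, and it is correct. Your necessity direction is exactly the standard accounting argument and needs no further comment. Your sufficiency route --- elementary intervals at the breakpoints, a transportation feasibility problem with per-edge caps $x_{ji}\le \ell_i$ and per-interval caps $m\ell_i$, min-cut analysis reducing to prefix cuts, then McNaughton's wrap-around inside each elementary interval --- is essentially the classical network-flow proof, i.e.\ the same style of argument Horn himself uses, so this is not a genuinely different route so much as the canonical one. Two details you handle correctly and that are worth keeping explicit in a write-up: first, the cap $x_{ji}\le\ell_i$ is what forces the separate condition $d_j\ge t+p_j|_t$ to appear among the binding cuts (the cut isolating a single job $J_j$ with all intervals on the sink side has capacity $\sum_{j'\ne j}p_{j'}|_t+(d_j-t)$); this condition is \emph{not} implied by the $V_{min}$ inequality once $m\ge 2$, so dropping the cap would break the equivalence. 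Second, the uncrossing step showing that only prefix interval-sets need be tested is the one piece you assert rather than execute; it is standard and true (moving an earlier interval to the source side changes the cut by $m\ell_i$ minus $\ell_i$ times the number of source-side jobs with $d_j\ge\tau_i$, and if that count exceeds $m$ the prefix cut is only smaller), but it does need to be written out, as you correctly flag.
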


\begin{proof}
See Horn~\cite{Hor74}, proof of Theorem~2.
\end{proof}

Next we introduce two algorithms: the first algorithm decides whether to accept a newly submitted job while the second algorithm determines the schedule of the accepted jobs. Afterwards, we prove that the combination of both algorithms always produces a valid solution and give the competitive ratio of our approach. 

Main component of Algorithm~\ref{alg:online} \textit{Acceptance for Preemption} is the non decreasing threshold deadline $d_{min}$. We simply accept a new job if and only if its deadline is at least as large as $d_{min}$, see Lines 5 to 6 and 8 to 9. Initially, we assume $r_j\geq 0$ for all jobs and set $d_{min}=0$ (Line 1). Whenever there is a submission of a new job then Line 3 prevents $d_{min}$ to be in the past. Line 4 introduces a compensation value $V_{\Delta}$ for job parts with a large deadline that are scheduled earlier due to the availability of machines, compare Algorithm~\ref{alg:schedule}, lines 14-18, and Lemma~\ref{lem:Vmin_change}. Due to this early scheduling, progression of time may result in $V_{min}(d_{min})|_{r_j}<(d_{min}-r_j)\cdot f(m,\varepsilon)$. $V_{\Delta}$ compensates for this reduction and prevents an increase of the threshold that is too small. To this end, Line 7 sets $d_{min}$ to the largest value $\tau$ that satisfies $V_{min}(\tau)|_{r_j}+V_{\Delta}=(\tau-r_j)\cdot f(m,\varepsilon)$.

\begin{algorithm}[ht]
\caption{Acceptance for Preemption}
\label{alg:online}
\begin{algorithmic}[1]
\STATE{$d_{min}=0$}
\FOR{each newly submitted job $J_j$}
\STATE{$d_{min}=\max\{d_{min},r_j\}$}
\STATE{$V_{\Delta}=(d_{min}-r_j)\cdot f(m,\varepsilon)-V_{min}(d_{min})|_{r_j}$}
\IF{$d_j\geq d_{min}$}
\STATE{accept $J_j$}
\STATE{$d_{min}=\operatorname{arg\,max}\{(\tau-r_j)\cdot f(m,\varepsilon)=V_{min}(\tau)|_{r_j}+V_{\Delta}\}$}
\ELSE
\STATE{reject $J_j$}
\ENDIF
\ENDFOR
\end{algorithmic}
\end{algorithm}

Algorithm~\ref{alg:schedule} \textit{Schedule Generation} uses the current time $t$ as input parameter and generates a schedule for interval $[t,T)$ including the specification of the end of the interval $T$ (Lines 6, 11, and 17). We must run the algorithm again if we have accepted a new job or as soon as time has progressed to $T$. We determine the largest deadline $d_k$ such that at most $m$ jobs contribute to $V_{min}(d_k)|_t$ (Line 2). If there is no such deadline, we skip any preallocation and select the setting to produce an LRPT schedule for the contributions of the jobs to $V_{min}(d_{1})|_t$ (Lines 3 to 7). Otherwise, we execute a preallocation by starting each of the jobs contributing to $V_{min}(d_ {k})|_t$ on a separate machine (Lines 9 and 10). $T$ exceeds $t$ by at most the smallest contribution among those jobs to $V_{min}(d_{k})|_t$ (Line 11). $m_{LRPT}$ is the number of idle machines after this preallocation (Line 12). If there are idle machines in interval $[t,T)$ and the total number of uncompleted jobs exceeds $m$ then we apply LRPT to determine a schedule starting at time $t$ on these $m_{LRPT}$ idle machines for the contribution of any not yet allocated job to $V_{min}(d_{k+1})|_t$ (Lines 15 and 16). If a machine becomes idle in this LRPT schedule before the previously determined value of $T$ then we reduce $T$ accordingly (Line 17). 
\begin{algorithm}[t]
\caption{Schedule Generation$(t)$}
\label{alg:schedule}
\begin{algorithmic}[1]
\STATE{index the deadlines such that $d_0=t<d_1< \ldots <d_{\nu}=\max_{J_j\in\mathcal{J}}\{d_j\}$}
\STATE{$d_k$ is the largest deadline with at most $m$ jobs contributing to $V_{min}(d_k)|_t$}
\IF{there is no such deadline $d_k$}
\STATE{$k=0$}
\STATE{${\mathcal{J}}=\emptyset$}
\STATE{$T=d_1$}
\STATE{$m_{LRPT}=m$}
\ELSE
\STATE{${\mathcal{J}}=\{\mbox{all jobs contributing to }V_{min}(d_{k})|_t\}$}
\STATE{allocate each job $\in{\mathcal{J}}$ to a separate machine}
\STATE{$T=t+\min\limits_{J_j\in\mathcal{J}}\{\mbox{contribution of }J_j\mbox{ to }V_{min}(d_{k})|_t\}$}
\STATE{$m_{LRPT}=m-|{\mathcal{J}}|$}
\ENDIF
\IF{$k<\nu$ and $m_{LRPT}>0$} 
\STATE{${\mathcal{J}}_{LRPT}=\{\mbox{all jobs contributing to }V_{min}(d_{k+1})\}\backslash {\mathcal{J}}$}
\STATE{generate an LRPT schedule}
\STATE{$T=\min\{T,$ first idle time in the LRPT schedule$\}$}
\ENDIF
\RETURN{$T$}
\end{algorithmic}
\end{algorithm}
Since Algorithm~\ref{alg:schedule} prioritizes contributions to $V_{min}(\tau_1)|_t$ over additional contributions to $V_{min}(\tau_2)|_t$ for $\tau_2>\tau_1>t$, it generates a valid schedule if one exists, see Theorem~\ref{thm:horn}. 

Next we discuss the influence of progression of time due to Algorithm~\ref{alg:schedule} on $V_{min}$ and on $d_{min}$ for a valid schedule.
\begin{lemma}
\label{lem:Vmin_change}
If Algorithm~\ref{alg:schedule} has generated a valid schedule starting at time $t$ then progression from time $t$ to time $t'$ without acceptance of a new job in interval $(t,t')$ produces 
\begin{align}
\label{eq:Vmin_progression}
V_{min}(\tau)|_{t'} &\leq \frac{\tau-t'}{\tau-t} \cdot V_{min}(\tau)|_{t} \;\; \mbox{ for each } \tau\geq t' 
\end{align}
and for each $\tau > d_{min}|_{t'}$,
\begin{align}
\label{eq:dmin_progression}
V_{min}(\tau)|_{t'}-V_{min}(d_{min}|_{t'})& < (\tau-d_{min}|_{t'})\cdot f(m,\varepsilon).
\end{align}
\end{lemma}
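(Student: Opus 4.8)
The plan rests on one elementary observation about how $V_{min}$ changes under processing, together with the scheduling structure of Algorithm~\ref{alg:schedule}. Write $c_j(\tau)|_s$ for the contribution of $J_j$ to $V_{min}(\tau)|_s$ (the piecewise‑linear ``ramp'' in the definition of $V_{min}$), and let $\delta_j\in[0,\min\{t'-t,\,p_j|_t\}]$ be the amount that the scheduler runs on $J_j$ during $(t,t')$ (a job occupies at most one machine at a time, and $(t,t')$ contains no acceptance, so the scheduler is just Algorithm~\ref{alg:schedule} re‑invoked at the break points $T$). A three‑way case distinction on whether $\tau\le d_j-p_j|_t$, $d_j-p_j|_t<\tau<d_j$, or $\tau\ge d_j$ gives the per‑job identity $c_j(\tau)|_{t'}=\max\{0,\,c_j(\tau)|_t-\delta_j\}$, which composes across successive invocations. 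Summing over $j$ and using $\max\{0,a-b\}=a-\min\{a,b\}$, this becomes the accounting identity
\[
V_{min}(\tau)|_{t'}=V_{min}(\tau)|_t-W_\tau,\qquad W_\tau:=\sum_j\min\{c_j(\tau)|_t,\delta_j\},
\]
where the ``useful work toward $\tau$'' $W_\tau$ is non‑decreasing in $\tau$ because each $c_j(\tau)|_t$ is.

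For inequality~\eqref{eq:Vmin_progression} I would prove that the density $\varrho_\tau(s):=V_{min}(\tau)|_s/(\tau-s)$ is non‑increasing on $[t,t']$; evaluating at $s=t$ and $s=t'$ then yields the claim. Since processing a $\tau$‑urgent job lowers its contribution at unit rate while processing any other job leaves $V_{min}(\tau)|_s$ unchanged, $-\tfrac{d}{ds}V_{min}(\tau)|_s$ equals the number of machines busy with $\tau$‑urgent jobs at time $s$, and a one‑line computation shows $\varrho_\tau'(s)\le 0$ precisely when this number is at least $\varrho_\tau(s)$. To get the latter I would combine $\varrho_\tau(s)\le m$ (validity, Theorem~\ref{thm:horn}) and $\varrho_\tau(s)\le n_\tau(s)$, the number of $\tau$‑urgent jobs (each contributes at most $\tau-s$ to $V_{min}(\tau)|_s$, using $d_j\ge s+p_j|_s$), with the structural fact that Algorithm~\ref{alg:schedule} always keeps at least $\min\{m,n_\tau(s)\}$ machines busy with $\tau$‑urgent jobs.

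For~\eqref{eq:dmin_progression}, the first step is that $d_{min}$ has not changed since the last acceptance, at some time $r^{\ast}\le t$: a job is rejected only when its deadline, hence also its release date, is below the current $d_{min}$, so Line~3 of Algorithm~\ref{alg:online} never raises $d_{min}$ on a rejection, and Line~7 fires only on acceptances. At $r^{\ast}$, Line~7 sets $d_{min}$ to the largest zero of $\sigma\mapsto(\sigma-r^{\ast})f(m,\varepsilon)-V_{min}(\sigma)|_{r^{\ast}}-V_{\Delta}$; as this map tends to $+\infty$ it is strictly positive for $\sigma>d_{min}$, so $V_{min}(\tau)|_{r^{\ast}}-V_{min}(d_{min})|_{r^{\ast}}<(\tau-d_{min})f(m,\varepsilon)$ for every $\tau>d_{min}$. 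Applying the accounting identity over $[r^{\ast},t']$ (which still contains no acceptance) and using $W_\tau\ge W_{d_{min}}$ for $\tau\ge d_{min}$,
\[
V_{min}(\tau)|_{t'}-V_{min}(d_{min})|_{t'}=\bigl(V_{min}(\tau)|_{r^{\ast}}-V_{min}(d_{min})|_{r^{\ast}}\bigr)-(W_\tau-W_{d_{min}})<(\tau-d_{min})f(m,\varepsilon),
\]
which is~\eqref{eq:dmin_progression} since $d_{min}=d_{min}|_{t'}$.

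The main obstacle is the structural fact invoked for~\eqref{eq:Vmin_progression}: that Algorithm~\ref{alg:schedule} keeps $\min\{m,n_\tau(s)\}$ machines busy with $\tau$‑urgent jobs for \emph{every} horizon $\tau$ simultaneously. This needs a case analysis over the branches of Algorithm~\ref{alg:schedule} (whether a preallocation deadline $d_k$ exists, whether idle machines remain) crossed with the position of $\tau$ relative to $d_k$ and $d_{k+1}$. For $\tau\le d_k$ every $\tau$‑urgent job lies in the preallocated set and is therefore run; for larger $\tau$ one must show in addition that the LRPT rule, applied to the contributions to $V_{min}(d_{k+1})|_t$, schedules the $\tau$‑urgent jobs ahead of the non‑$\tau$‑urgent ones, since among the LRPT jobs the former have strictly larger remaining contribution. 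One also has to check that the value of $T$ fixed in Lines~11 and~17 is small enough that no preallocated or LRPT job drops out of $\tau$‑urgency inside the segment, and that validity — hence $d_j\ge s+p_j|_s$ and $V_{min}(\tau)|_s\le(\tau-s)m$ throughout $[t,t']$ — is preserved, which is exactly the prioritization property stated just before the lemma.
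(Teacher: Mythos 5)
Your treatment of Eq.~(\ref{eq:Vmin_progression}) is sound and is essentially the paper's own argument recast in differential form: the paper proves the same monotonicity of $V_{min}(\tau)|_s/(\tau-s)$ by a discrete two-case computation (either all machines process $\tau$-urgent work, giving the bound via $V_{min}(\tau)|_s/(\tau-s)\leq m$, or only $k<m$ such jobs remain, giving it via the bound $k$), and it rests on exactly the prioritization property of Algorithm~\ref{alg:schedule} that you flag as the main obstacle. Your observation that a $\tau$-urgent job in the LRPT pool has a strictly larger contribution to $V_{min}(d_{k+1})|_t$ than any non-$\tau$-urgent pool job is the right way to discharge that property, so this half is fine.

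The gap is in Eq.~(\ref{eq:dmin_progression}). You read $d_{min}|_{t'}$ as the stored variable, unchanged since the last acceptance at $r^{\ast}$. In this paper, however, $d_{min}|_{t'}$ means $\max\{d_{min}|_{t},t'\}$: the proof of Lemma~\ref{lem:valid} uses it in exactly this form, and the paper's own proof of the present lemma explicitly splits into the cases $d_{min}|_t\geq t'$ and $d_{min}|_t<t'$, setting $d_{min}|_{t'}=t'$ and $V_{\Delta}|_{t'}=0$ in the latter. When $d_{min}|_{r^{\ast}}<t'$ the claim to be proved is therefore $V_{min}(\tau)|_{t'}<(\tau-t')\cdot f(m,\varepsilon)$, whereas your additive accounting identity only yields the weaker $V_{min}(\tau)|_{t'}<(\tau-d_{min}|_{r^{\ast}})\cdot f(m,\varepsilon)$; nothing in your argument forces $W_\tau\geq(t'-d_{min}|_{r^{\ast}})\cdot f(m,\varepsilon)$, and this can indeed fail when little work remains. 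The stronger form is not cosmetic: it is what guarantees that when the next job arrives at $r_j\geq t'$ and Line~4 of Algorithm~\ref{alg:online} resets $V_{\Delta}$ to $0$, the invariant $V_{min}(\sigma)|_{r_j}\leq(\sigma-r_j)\cdot f(m,\varepsilon)$ of Eq.~(\ref{eq:V_cond_top}) survives, which Line~7 and the acceptance analysis rely on. The paper closes this case by combining the \emph{multiplicative} decay of Eq.~(\ref{eq:Vmin_progression}) with the joint induction hypothesis $V_{min}(\tau)|_t+V_{\Delta}|_t<(\tau-t)\cdot f(m,\varepsilon)$ from Lemma~\ref{lem:valid}, obtaining $V_{min}(\tau)|_{t'}\leq\frac{\tau-t'}{\tau-t}\cdot(\tau-t)\cdot f(m,\varepsilon)=(\tau-t')\cdot f(m,\varepsilon)$; you would need to import that joint induction to repair this case. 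Your handling of the remaining case ($d_{min}$ unchanged and at least $t'$), including the derivation of the base inequality directly from Line~7 at $r^{\ast}$ rather than by event induction, is correct and a clean alternative to the paper's argument there.
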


\begin{proof}
Eq.~(\ref{eq:Vmin_progression}) holds for time $\tau>t'$ if no machine is idle in interval $[t,t')$ and every machine only executes job parts contributing to $V_{min}(\tau)|_{t}$  due to
\begin{align*}
m &\overset{Eq.(\ref{eq:schedulability})}{\geq} \frac{V_{min}(\tau)|_{t}}{\tau-t}\geq \frac{V_{min}(\tau)|_{t}-m\cdot (t'-t)}{\tau-t-(t'-t)}=\frac{V_{min}(\tau)|_{t'}}{\tau-t'}.
\end{align*}
If at least one machine executes in interval $[t,t')$ some job parts that do not contribute to $V_{min}(\tau)|_{t}$ for some time $\tau\geq t'$ or has some idle time then Algorithm~\ref{alg:schedule} guarantees that at time $t'$, there are $k<m$ jobs with uncompleted job parts contributing to $V_{min}(\tau)|_{t}$ and the total amount of processing time in interval $[t,t')$ that contributes to $V_{min}(\tau)|_{t}$ is at least $k\cdot (t'-t)$. Then we have
\begin{align*}
\frac{V_{min}(\tau)|_{t}}{\tau-t} & \geq \frac{V_{min}(\tau)|_{t'}+ k\cdot (t'-t)}{\tau-t'+t'-t} \geq \frac{V_{min}(\tau)|_{t'}}{\tau-t'}.
\end{align*}
We assume that Eq.~(\ref{eq:dmin_progression}) holds for $t$ and apply an inductive approach. For $d_{min}|_t\geq t'$, Algorithm~\ref{alg:online} sets $d_{min}|_{t'}=d_{min}|_t$. Due to  Algorithm~\ref{alg:schedule}, we have for each $\tau >d_{min}|_{t'}$,
\begin{align*}
V_{min}(\tau)|_{t'}-V_{min}(d_{min}|_{t'})& \leq V_{min}(\tau)|_{t}-V_{min}(d_{min}|_{t}) \\
& <(\tau-d_{min}|_{t})\cdot f(m,\varepsilon) \\
& < (\tau-d_{min}|_{t'})\cdot f(m,\varepsilon).
\end{align*}
For $d_{min}|_t< t'$, Algorithm~\ref{alg:online} sets $d_{min}|_{t'}=t'$ and $V_{\Delta}|_{t'}=0$. Then we have for each $\tau>d_{min}|_{t'}$,
\begin{align*}
V_{\Delta}|_{t'}+V_{min}(\tau)|_{t'} & = V_{min}(\tau)|_{t'} \overset{Eq.(\ref{eq:Vmin_progression})}{\leq} \frac{\tau-t'}{\tau-t}\cdot V_{min}(\tau)|_{t} \\
& \leq \frac{\tau-t'}{\tau-t}\cdot \left(V_{min}(\tau)|_{t}+V_{\Delta}|_t \right) \\ 
& <\frac{\tau-t'}{\tau-t}\cdot \left((\tau-t)\cdot f(m,\varepsilon) \right)\\
& < (\tau-d_{min}|_{t'})\cdot f(m,\varepsilon).
\end{align*}
\qed 
\end{proof}

Next, we prove that Algorithms~\ref{alg:online} and \ref{alg:schedule} guarantee the existence of a valid schedule. To this end, we introduce a new function $V(x)$ with $0\leq x$. In the middle part of this definition, we use $x=y\cdot\left(\frac{\varepsilon}{1+\varepsilon}\right)^{\frac{m-h}{m}}$ with $h\in\{0,1,\ldots,m-1\}$ and $1\leq y \leq\left(\frac{1+\varepsilon}{\varepsilon}\right)^{\frac{1}{m}}$. 
\begin{align}
\label{eq:Vx_definition}
V(x) := \left\{\begin{array}{ll}
x\cdot f(m,\varepsilon) &  \mbox{ for }   x \geq 1\\
\sum\limits_{i=0}^h \left(\frac{\varepsilon}{1+\varepsilon}\right)^{\frac{m-i}{m}} + x\cdot (m-h-1) & \mbox{ for }  1\leq x \leq \frac{\varepsilon}{1+\varepsilon}  \\
x\cdot m &  \mbox{ for }   \frac{\varepsilon}{1+\varepsilon}\geq x \geq 0 
\end{array} \right.
\end{align}

$V(x)$ is piecewise linear with non-negative slopes. Note that we have defined $V(x)$ twice at all corner points ($y=1$ and $x=1$). At each of these points, both definitions produce the same value. Therefore, $V(x)$ is continuous and we have
\begin{align}
\label{eq:Vx_value}
V(x_1) & \leq V(x_2)\;\; \mbox{ for } 0\leq x_1 < x_2 \\
\label{eq:Vx_constant}
V(x_1) & = V(x_2) \;\; \mbox{ only for } x_1,x_2\in \left[\left(\frac{\varepsilon}{1+\varepsilon} \right)^{\frac{1}{m}},1\right]\\
\label{eq:Vx_monotony}
m \geq \frac{V(x_1)}{x_1} & \geq \frac{V(x_2)}{x_2}\geq f(m,\varepsilon) \;\; \mbox{ for } 0<x_1<x_2 
\end{align}
Eq.~(\ref{eq:Vx_monotony}) holds since the slopes are non-negative and decreasing in interval $(0,1)$. Based on function $V(x)$, we claim the following property of our combination of algorithms:
\begin{lemma}
\label{lem:valid}
Algorithms~\ref{alg:online} and \ref{alg:schedule} guarantee  
\begin{align}
\label{eq:V_cond_bottom}
V_{min}(\tau)|_t & \leq (d_{min}-t)\cdot V\left(\frac{\tau-t}{d_{min}-t}\right)  & \mbox{ for } d_{min} >\tau \geq t 
\end{align} 
and for $\tau \geq d_{min}$, 
\begin{align}
\label{eq:V_cond_top} 
V_{min}(\tau)|_t & \leq V_{min}(d_{min})|_{t}+(\tau-d_{min})\cdot f(m,\varepsilon).  
\end{align}
\end{lemma}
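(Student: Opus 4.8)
The plan is to establish (\ref{eq:V_cond_bottom}) and (\ref{eq:V_cond_top}) \emph{jointly}, by induction over the events that change the state the two algorithms maintain: arrivals of jobs (handled by Algorithm~\ref{alg:online}) and the progression of time between re-invocations of Algorithm~\ref{alg:schedule} (controlled by Lemma~\ref{lem:Vmin_change}), keeping the convention $d_{min}\ge t$ that Line~3 enforces. The base case, time $0$ with $d_{min}=0$ and no accepted jobs, is immediate: (\ref{eq:V_cond_bottom}) is vacuous and (\ref{eq:V_cond_top}) reads $0\le\tau\cdot f(m,\varepsilon)$. It is worth recording at the outset that (\ref{eq:V_cond_bottom}) and (\ref{eq:V_cond_top}) together imply the schedulability condition (\ref{eq:schedulability}) of Theorem~\ref{thm:horn}: for $\tau<d_{min}$ bound $V(x)\le x\cdot m$ via (\ref{eq:Vx_monotony}), and for $\tau\ge d_{min}$ combine $V_{min}(d_{min})|_t\le(d_{min}-t)\cdot m$ with $f(m,\varepsilon)\le m$, while the first condition $d_j\ge t+p_j|_t$ holds by the slack assumption at acceptance and is preserved by Algorithm~\ref{alg:schedule}. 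Hence the invariant being proved certifies at every step that the generated schedule is valid, which is exactly what licenses the use of Lemma~\ref{lem:Vmin_change} in the time-progression step; this is the joint induction.

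\emph{Time-progression step.} Assume (\ref{eq:V_cond_bottom}) and (\ref{eq:V_cond_top}) at a re-invocation time $t$, let $t'>t$ with no arrival in $(t,t')$, and set $s=t'-t$. Then (\ref{eq:V_cond_top}) at $t'$ coincides with Eq.~(\ref{eq:dmin_progression}). For (\ref{eq:V_cond_bottom}) at $t'$: if $d_{min}|_t<t'$, Algorithm~\ref{alg:online} puts $d_{min}|_{t'}=t'$ and the claim is vacuous; otherwise $d_{min}|_{t'}=d_{min}|_t=:d_{min}$, and since subtracting $s$ from numerator and denominator of $\tfrac{\tau-t}{d_{min}-t}\in(0,1)$ strictly decreases it, (\ref{eq:Vx_monotony}) gives $\tfrac{V(x_1)}{x_1}\ge\tfrac{V(x_2)}{x_2}$ for $x_1=\tfrac{\tau-t'}{d_{min}-t'}<x_2=\tfrac{\tau-t}{d_{min}-t}$; multiplying by $\tau-t'$ and feeding in the contraction estimate (\ref{eq:Vmin_progression}) together with the inductive bound on $V_{min}(\tau)|_t$ yields (\ref{eq:V_cond_bottom}) at $t'$.

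\emph{Arrival step.} A rejected job leaves $V_{min}$ and $d_{min}$ unchanged (apart from the harmless Line~3 clamp), so assume $J_j$, released at $r:=r_j$, is accepted. Write $d_{min}^{-}$ for the threshold just before Line~7, $V_{min}^{-}$ for the aggregation function over the previously accepted jobs, and $c_j(\tau)=\max\{0,\min\{p_j,\tau-(d_j-p_j)\}\}$ for the contribution of $J_j$ --- a unit-slope ramp switching on at $d_j-p_j\ge r+\varepsilon p_j$ and clipped at $p_j$ for $\tau\ge d_j\ge r+(1+\varepsilon)p_j$ --- so the new aggregation function is $V_{min}^{-}+c_j$. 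Letting $\tau\uparrow d_{min}^{-}$ in the (already established) bound (\ref{eq:V_cond_bottom}) for $V_{min}^{-}$ gives $V_{min}^{-}(d_{min}^{-})|_r\le(d_{min}^{-}-r)f(m,\varepsilon)$, so the compensation value $V_\Delta$ of Line~4 satisfies $V_\Delta\ge0$ (and $V_\Delta=0$ in the Line~3 case $d_{min}^{-}=r$). By Line~7 the new threshold $d_{min}$ is the largest root of $g(\tau):=(\tau-r)f(m,\varepsilon)-\big(V_{min}^{-}(\tau)+c_j(\tau)+V_\Delta\big)$; since $g(d_{min}^{-})=-c_j(d_{min}^{-})\le0$ and $g(\tau)\to+\infty$, such a root exists and $d_{min}\ge d_{min}^{-}$. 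Inequality (\ref{eq:V_cond_top}) for the new state follows at once: $g$ has no zero beyond $d_{min}$, so $g(\tau)>0$ for $\tau>d_{min}$, and subtracting $g(d_{min})=0$ gives $\big(V_{min}^{-}(\tau)+c_j(\tau)\big)-\big(V_{min}^{-}(d_{min})+c_j(d_{min})\big)<(\tau-d_{min})f(m,\varepsilon)$.

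The remaining obligation --- (\ref{eq:V_cond_bottom}) for the new state, i.e.\ $V_{min}^{-}(\tau)+c_j(\tau)\le(d_{min}-r)\,V\!\big(\tfrac{\tau-r}{d_{min}-r}\big)$ for $r\le\tau<d_{min}$ --- is the technical heart of the argument and the step I expect to be the main obstacle. The plan is to pass to the normalized abscissa $x=\tfrac{\tau-r}{d_{min}-r}\in[0,1)$ and verify the inequality segment by segment, the relevant breakpoints being the geometric grid $\{(\tfrac{\varepsilon}{1+\varepsilon})^{(m-h)/m}\}_{h=0}^{m}$ from the definition (\ref{eq:Vx_definition}) of $V$ together with the images of $d_{min}^{-}$, $d_j-p_j$ and $d_j$. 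On each sub-interval $V_{min}^{-}+c_j$ is affine with a non-negative integer slope; I bound $V_{min}^{-}$ above using (\ref{eq:V_cond_bottom}) below $d_{min}^{-}$ and (\ref{eq:V_cond_top}) above it, rescale that bound from $d_{min}^{-}$ to $d_{min}$ by (\ref{eq:Vx_monotony}) again, add the ramp $c_j$, and compare with the matching linear piece of $(d_{min}-r)V(\cdot/(d_{min}-r))$. The arithmetic that makes the two sides meet is precisely the telescoping identity (\ref{eq:func_add}) for $\sum(\tfrac{\varepsilon}{1+\varepsilon})^{(m-i)/m}$, together with the fact that in the extremal tight-slack configuration accepting one job scales the committed interval by $(\tfrac{1+\varepsilon}{\varepsilon})^{1/m}$ and drops the slope by one --- in other words, $V$ is engineered as the fixed point of the operator ``superimpose a clipped unit-slope ramp, then renormalize by the enlarged threshold interval'', and the step amounts to checking that this operator never lifts the profile above $V$. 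Keeping the several case distinctions --- positions of $d_{min}^{-}$, $d_j-p_j$, $d_j$ relative to the geometric grid, and the sign bookkeeping $V_\Delta\ge0$ --- under control is where the real effort lies.
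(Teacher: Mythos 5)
Your setup is sound and matches the paper's strategy: a joint induction over the two events, with the time-progression step discharged by Lemma~\ref{lem:Vmin_change} and Eq.~(\ref{eq:Vx_monotony}), and Eq.~(\ref{eq:V_cond_top}) after an acceptance following from the root characterization of Line~7 (your observation that $V_\Delta\ge 0$ via $\tau\uparrow d_{min}$ in Eq.~(\ref{eq:V_cond_bottom}) is also correct). But the proof is not complete: you explicitly defer the verification of Eq.~(\ref{eq:V_cond_bottom}) for the post-acceptance state to a ``plan,'' and that verification is not a routine segment-by-segment check --- it is essentially the entire content of the lemma and occupies most of the paper's proof. In particular, your plan does not establish the quantitative anchor $d_{min}^{new}\ge d_{min}\cdot\left(\frac{1+\varepsilon}{\varepsilon}\right)^{\frac{1}{m}}$ (Eq.~(\ref{eq:dmin_tight})), which is what makes the rescaled profile $\left(d_{min}^{new}\right)\cdot V\!\left(\tau/d_{min}^{new}\right)$ dominate the old profile plus the ramp; without it, invoking Eq.~(\ref{eq:Vx_monotony}) to ``rescale from $d_{min}^{-}$ to $d_{min}$'' has nothing to bite on. That anchor only holds in the extremal configuration $d_j=d_{min}$ with tight slack, so the argument genuinely needs the two reductions the paper performs: (i) from $p_j<\frac{d_{min}}{1+\varepsilon}$ to tight slack, done by comparing $V_{min}^{new}$ with its tight-slack counterpart via Eq.~(\ref{eq:V_min_nt}); and (ii) from $d_j>d_{min}$ to $d_j=d_{min}$, done either by a monotone transfer of processing time toward earlier deadlines or by splitting $J_j$ into two jobs $J_{j_1},J_{j_2}$ that each satisfy the slack condition. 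Neither reduction is a bookkeeping ``case distinction'' subsumed by your grid decomposition, and the job-splitting step in particular requires exhibiting the cut point $\delta$ and checking that accepting $J_{j_1}$ does not move $d_{min}$.

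A second concrete hole in the plan: the interval $d_{min}<\tau<d_{min}^{new}\cdot\left(\frac{\varepsilon}{1+\varepsilon}\right)^{\frac{1}{m}}$ (nonempty when $d_{min}^{new}$ jumps by more than one grid factor) is covered neither by the inductive bound below the old threshold nor by the constancy of $V$ on $\left[\left(\frac{\varepsilon}{1+\varepsilon}\right)^{\frac{1}{m}},1\right]$ from Eq.~(\ref{eq:Vx_constant}); the paper handles it with a separate estimate driven by Line~7 and Eq.~(\ref{eq:Vx_value}). Your intuition that $V$ is the fixed point of ``superimpose a unit ramp, renormalize by $\left(\frac{1+\varepsilon}{\varepsilon}\right)^{\frac{1}{m}}$, drop the slope by one'' is exactly the right picture and is what the chain Eq.~(\ref{eq:Vnewmin_tight})--(\ref{eq:Vnewmin_calc}) formalizes, but as written the proposal asserts rather than proves the inductive step for Eq.~(\ref{eq:V_cond_bottom}), so the lemma is not yet established.
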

 
$V_{min}(\tau)|_t  \leq (\tau-t)\cdot f(m,\varepsilon)-V_{\Delta}$ is equivalent to Eq.~(\ref{eq:V_cond_top}) for $\tau > d_{min}>t$.
\begin{proof}
We use an inductive approach by assuming validity of Eq.~(\ref{eq:V_cond_bottom}) and (\ref{eq:V_cond_top}) for time $t$ and separately consider the events \textit{progression of time} and \textit{acceptance of a new job}.

We start with progression from time $t$ to time $t'$ without acceptance of a new job in interval $(t,t')$. Eq.~(\ref{eq:V_cond_bottom}) and (\ref{eq:V_cond_top}) clearly hold for time $\tau=t'$. Eq.~(\ref{eq:dmin_progression}) guarantees validity of Eq.~(\ref{eq:V_cond_top}) for $\tau>d_{min}|_{t'}=\max\{d_{min}|_{t},t'\}$. For $t'<\tau\leq d_{min}|_{t}=d_{min}|_{t'}$, Eq.~(\ref{eq:V_cond_bottom}) also remains valid due to Lemma~\ref{lem:Vmin_change}:
\begin{align*}
V_{min}(\tau)|_{t'}&\overset{Eq.(\ref{eq:Vmin_progression})}{\leq} \frac{\tau-t'}{\tau-t}\cdot V_{min}(\tau)|_{t} \\
& \overset{Eq.(\ref{eq:V_cond_bottom})}{\leq} (\tau-t')\cdot\frac{d_{min}|_{t}-t}{\tau-t}\cdot  V\left(\frac{\tau-t}{d_{min}|_{t}-t}\right)\\
& \overset{Eq.(\ref{eq:Vx_monotony})}{\leq} (\tau-t')\cdot\frac{d_{min}|_{t}-t'}{\tau-t'}\cdot  V\left(\frac{\tau-t'}{d_{min}|_{t}-t'}\right) \\
& \leq (d_{min}|_t-t')\cdot  V\left(\frac{\tau-t'}{d_{min}|_t-t'}\right).
\end{align*}
 
The second event is the acceptance of a new job $J_j$ with $r_j=t$, $(1+\varepsilon)\cdot p_j \leq d_j-r_j$, and $d_j\geq d_{min}$, see Algorithm~\ref{alg:online}. To simplify the formal description of this event, we omit the appendix $|_t$ of the variables and apply a time shift by $-t$, that is, we assume $r_j=t=0$. This time shift does not affect the validity of our transformations since they always use $d_{min}-t$ or $\tau-t$.  

Again we assume the validity of Eq.~(\ref{eq:V_cond_bottom}) and (\ref{eq:V_cond_top}) before the submission of $J_j$. The acceptance of $J_j$ produces a new deadline threshold $d_{min}^{new}\geq d_{min}$. We partition time into several intervals: Eq.~(\ref{eq:V_cond_top}) is always valid for $\tau\geq d_{min}^{new}$ due to Line 7 of Algorithm~\ref{alg:online} and it remains valid for $d_{min}^{new}\cdot \left(\frac{\varepsilon}{1+\varepsilon}\right)^{\frac{1}{m}}\leq \tau < d_{min}^{new}$ due to Eq.~(\ref{eq:Vx_constant}) and $V_{min}(\tau)\leq V_{min}(d_{min}^{new})$. For $\tau\leq d_j-p_j$, Eq.~(\ref{eq:V_cond_bottom}) and (\ref{eq:V_cond_top}) remain valid  due to Eq.~(\ref{eq:Vx_monotony}) since there is no change in $V_{min}(\tau)$. 

For the remaining intervals, we start with the case $d_j=d_{min}=\frac{V_{min}(d_{min})+V_{\Delta}}{f(m,\varepsilon)}$. 

If job $J_j$ has tight slack ($p_j=\frac{d_{min}}{1+\varepsilon}$) then Line 7 of Algorithm~\ref{alg:online} produces  
\begin{align} 
\label{eq:dmin_tight}
d_{min}^{new} \geq \frac{V_{min}^{new}(d_{min})+V_{\Delta}}{f(m,\varepsilon)} & = \frac{V_{min}(d_{min})+V_{\Delta}+p_j}{f(m,\varepsilon)} \\ \nonumber
& = d_{min}+\frac{d_{min}}{(1+\varepsilon)\cdot f(m,\varepsilon)} \\
& \overset{Eq.(\ref{eq:min_parameter})}{=} 
d_{min}\cdot \left(\frac{1+\varepsilon}{\varepsilon} \right)^{\frac{1}{m}}
\end{align}
For $\tau=d_{min}\cdot y\cdot \left(\frac{\varepsilon}{1+\varepsilon} \right)^{\frac{m-h}{m}}$ with $h\in\{0,1,\ldots,m-1\}$ and $1\leq y \leq \left(\frac{1+\varepsilon}{\varepsilon} \right)^{\frac{1}{m}}$, there is 
\begin{align}
\nonumber
V_{min}^{new}(\tau) & = V_{min}(\tau)+\tau -d_{min}\cdot \frac{\varepsilon}{1+\varepsilon} \\ \nonumber
& \overset{Eq.(\ref{eq:V_cond_bottom})}{\leq} d_{min}\cdot V\left(\frac{\tau}{d_{min}} \right) + \tau - d_{min}\cdot \frac{\varepsilon}{1+\varepsilon} \\ \nonumber
 & \overset{Eq.(\ref{eq:Vx_definition})}{\leq} d_{min}\cdot \sum\limits_{i=0}^h \left(\frac{\varepsilon}{1+\varepsilon}\right)^{\frac{m-i}{m}} + \\ \nonumber 
& \hspace{40pt} + d_{min}\cdot y\cdot (m-h-1)\cdot \left(\frac{\varepsilon}{1+\varepsilon} \right)^{\frac{m-h}{m}} + \\ \nonumber
& \hspace{40pt} +d_{min}\cdot \left(y\cdot \left(\frac{\varepsilon}{1+\varepsilon} \right)^{\frac{m-h}{m}}-\frac{\varepsilon}{1+\varepsilon} \right)\\ \nonumber
& \leq d_{min} \cdot \sum\limits_{i=1}^h \left(\frac{\varepsilon}{1+\varepsilon}\right)^{\frac{m-i}{m}} + \\ 
\label{eq:Vnewmin_tight}
& \hspace{40pt} + d_{min} \cdot y \cdot (m-h)\cdot \left(\frac{\varepsilon}{1+\varepsilon} \right)^{\frac{m-h}{m}}.
\end{align}
For $h=0$, that is $d_{min}\cdot \frac{\varepsilon}{1+\varepsilon}\leq \tau \leq d_{\min}\cdot \left(\frac{1+\varepsilon}{\varepsilon} \right)^{\frac{1}{m}} \cdot \frac{\varepsilon}{1+\varepsilon}\leq d_{min}^{new}\cdot \frac{\varepsilon}{1+\varepsilon}$, we have
\begin{align*}
V_{min}^{new}(\tau) & \leq\tau \cdot m =\tau\cdot \frac{V\left(\frac{\tau}{d_{\min}\cdot \left(\frac{1+\varepsilon}{\varepsilon} \right)^{\frac{1}{m}}} \right)}{\frac{\tau}{d_{\min}\cdot \left(\frac{1+\varepsilon}{\varepsilon} \right)^{\frac{1}{m}}}} \\
& \leq d_{\min}\cdot \left(\frac{1+\varepsilon}{\varepsilon} \right)^{\frac{1}{m}} \cdot V\left(\frac{\tau}{d_{\min}\cdot \left(\frac{1+\varepsilon}{\varepsilon} \right)^{\frac{1}{m}}} \right) \\
& \overset{Eq.(\ref{eq:Vx_monotony})}{\leq} d_{min}^{new}\cdot V\left(\frac{\tau}{d_{\min}^{new}}\right)
\end{align*}
For $h\in\left\{1,\ldots , m-1\right\}$, there is
\begin{align}
\nonumber
V_{min}^{new}(\tau) & \overset{Eq.(\ref{eq:Vnewmin_tight})}{\leq} d_{min}\cdot  \left(\frac{1+\varepsilon}{\varepsilon} \right)^{\frac{1}{m}} \cdot \sum\limits_{i=1}^{h} \left(\frac{\varepsilon}{1+\varepsilon}\right)^{\frac{m-i+1}{m}} + \\ \nonumber
& \hspace{40pt}  + d_{min} \cdot 
y\cdot (m-h)\cdot \left(\frac{\varepsilon}{1+\varepsilon} \right)^{\frac{m-h}{m}} \\ \nonumber
& \leq d_{min}\cdot  \left(\frac{1+\varepsilon}{\varepsilon} \right)^{\frac{1}{m}} \cdot \left(\sum\limits_{i=0}^{h-1} \left(\frac{\varepsilon}{1+\varepsilon}\right)^{\frac{m-i}{m}} + \right. \\ \nonumber
& \hspace{40pt} + \left. y\cdot (m-h)\cdot \left(\frac{\varepsilon}{1+\varepsilon} \right)^{\frac{m-(h-1)}{m}}\right)\\ \nonumber
& \overset{Eq.(\ref{eq:Vx_definition})}{\leq} d_{min}\cdot  \left(\frac{1+\varepsilon}{\varepsilon} \right)^{\frac{1}{m}}\cdot V\left(\frac{\tau}{d_{min}\cdot  \left(\frac{1+\varepsilon}{\varepsilon} \right)^{\frac{1}{m}}}\right) \\ 
\label{eq:Vnewmin_calc}
& \overset{Eq.(\ref{eq:Vx_monotony}),(\ref{eq:dmin_tight})}{\leq} d_{min}^{new}\cdot V\left(\frac{\tau}{d_{min}^{new}}\right).
\end{align}
We compare this tight slack scenario to the acceptance of a job with $d_j=d_{min}$ but without tight slack ($p_j<\frac{d_{min}}{1+\varepsilon}$). We indicate the variables of the non-tight scenario with $'$ and obtain for $d_j-p_j < \tau \leq d_{min}$
\begin{align}
\label{eq:V_min_nt}
V_{min}^{new}(\tau)-\left(V_{min}^{new}(\tau)\right)' &= \frac{d_{min}}{1+\varepsilon}-p_j.
\end{align} 
This result leads to 
\begin{align*}
\left(V_{min}^{new}(\tau)\right)' &\overset{Eq.(\ref{eq:V_min_nt})}{\leq} \frac{\left(V_{min}^{new}(d_{min})\right)'+V_{\Delta}}{V_{min}^{new}(d_{min})+V_{\Delta}}\cdot V_{min}^{new}(\tau) \\
& \overset{Eq.(\ref{eq:dmin_tight})}{\leq} \frac{\left(V_{min}^{new}(d_{min})\right)'+V_{\Delta}}{d_{min}\cdot \left(\frac{1+\varepsilon}{\varepsilon} \right)^{\frac{1}{m}}\cdot f(m,\varepsilon)} \cdot V_{min}^{new}(\tau) \\
& \overset{Eq.(\ref{eq:Vnewmin_calc})}{\leq} \frac{\left(V_{min}^{new}(d_{min})\right)'+V_{\Delta}}{f(m,\varepsilon)}\cdot V\left(\frac{\tau}{d_{min}\cdot \left(\frac{1+\varepsilon}{\varepsilon} \right)^{\frac{1}{m}}} \right)\\
& \overset{Eq.(\ref{eq:Vx_value}),(\ref{eq:dmin_tight})}{\leq} \frac{\left(V_{min}^{new}(d_{min})\right)'+V_{\Delta}}{f(m,\varepsilon)}\cdot \\
& \hspace{40pt} \cdot V\left(\frac{\tau}{\frac{\left(V_{min}^{new}(d_{min})\right)'+V_{\Delta}}{f(m,\varepsilon)}}\right) \\
& \overset{Eq.(\ref{eq:Vx_monotony})}{\leq} \left(d_{min}^{new}\right)'\cdot V\left(\frac{\tau}{\left(d_{min}^{new}\right)'}\right).
\end{align*}
For the general case $p_j\leq \frac{d_{min}}{1+\varepsilon}$, we consider next an arbitrary $\tau$ with $d_j=d_{min}=\frac{V_{min}(d_{min})+V_{\Delta}}{f(m,\varepsilon)} < \tau < d_{min}^{new}\cdot \left(\frac{\varepsilon}{1+\varepsilon}\right)^{\frac{1}{m}}$. Due to Line 7 of Algorithm~\ref{alg:online}, we have 
\begin{align*}
V_{min}^{new}(\tau)& < V_{min}^{new}(d_{min}) + \left(\tau-d_{min}\right)\cdot f(m,\varepsilon)\\
& \overset{Eq.(\ref{eq:dmin_tight})}{<} d_{min}\cdot f(m,\varepsilon)\cdot \left(\frac{1+\varepsilon}{\varepsilon}\right)^{\frac{1}{m}}  -V_{\Delta}+ \\
& \hspace{40pt} + \left(\tau-d_{min}\right)\cdot f(m,\varepsilon)\\
& \overset{Eq.(\ref{eq:Vx_value})}{<} \left(\tau +d_{min}\cdot \left(\left(\frac{1+\varepsilon}{\varepsilon}\right)^{\frac{1}{m}}-1\right)\right)\cdot  \\
& \hspace{40pt} \cdot V\left(\left(\frac{\varepsilon}{1+\varepsilon}\right)^{\frac{1}{m}} \right)-V_{\Delta}\\
& < \tau \cdot \left(\frac{1+\varepsilon}{\varepsilon}\right)^{\frac{1}{m}}\cdot V\left(\frac{\tau}{\tau\cdot\left(\frac{1+\varepsilon}{\varepsilon}\right)^{\frac{1}{m}}} \right)-V_{\Delta} \\
& \overset{Eq.(\ref{eq:Vx_monotony})}{\leq} d_{min}^{new} \cdot V\left(\frac{\tau}{d_{min}^{new}} \right)-V_{\Delta}.
\end{align*}
Therefore, Eq.~(\ref{eq:V_cond_bottom}) remains valid after the acceptance of job $J_j$. 

Finally, we allow $d_j> d_{min}$ and address $d_j-p_j < \tau< d_{min}^{new}$. Due to the range of $\tau$, we can assume $d_j\leq d_{min}^{new}$. For $V_{min}(d_{min}^{new})\geq V_{min}(d_{min})+(d_j-d_{min})\cdot f(m,\varepsilon)$, we transfer processing time in direction of earlier deadlines to obtain    
\begin{align*}
\left(V_{min}(\tau)\right)' & = \left\lbrace
\begin{array}{l}
V_{min}(\tau) \mbox{ for } 0 \leq \tau < d_j \\
\max\left\lbrace d_j\cdot f(m,\varepsilon)-V_{\Delta}, V_{min}(\tau) \right\rbrace \mbox{ for } d_j\leq \tau
\end{array}
\right.
\end{align*}
Again we use the notation $(\ldots )'$ to indicate values after the transformation. Since we do not generate a schedule after the transformation, the missing continuity of $\left( V_{min}(\tau)\right)'$ for $\tau=d_j$ does not matter. The transformation leads to $\left( d_{min}\right)'=d_j$. Due to Eq.~(\ref{eq:Vx_monotony}), Eq.~(\ref{eq:V_cond_bottom}) remains valid after the transformation and we have already shown the validity of Eq.~(\ref{eq:V_cond_bottom}) after the acceptance of $J_j$. This validity also holds for the original values of $V_{min}^{new}(\tau)\leq \left(V_{min}^{new}(\tau)\right)'$.

For $V_{min}(d_{min}^{new})< V_{min}(d_{min})+(d_j-d_{min})\cdot f(m,\varepsilon)$, we determine the largest value $\delta \in (\max\{d_{min},d_j-p_j\},d_j)$ with $V_{min}(d_{min}^{new}) +(d_j-\delta)=V_{min}(d_{min})+(\delta -d_{min})\cdot f(m,\varepsilon)$. $\delta$ exists due to 
\begin{align*}
V_{min}(d_{min}^{new})+p_j& = (d_{min}^{new}-d_{min})\cdot f(m,\varepsilon)+V_{min}(d_{min}) \\
& > \left(\max\{d_{min},d_j-p_j\}-d_{min}\right)\cdot f(m,\varepsilon)+ \\
& \hspace{40pt} + V_{min}(d_{min}).
\end{align*}
We split job $J_j$ into two jobs $J_{j_1}$ and $J_{j_2}$ with $d_{j_1}=d_j$, $p_{j_1}=d_j-\delta$, $d_{j_2}=\delta$, and $p_{j_2}=p_j-(d_j-\delta )$. Both jobs observe the slack condition. We first submit and accept $J_{j_1}$. The acceptance of $J_{j_1}$ does not increase $d_{min}$  since $V_{min}(\tau)$ does not increase for any $\tau\leq \delta$. For any $\tau\in (\delta, d_{min}^{new}]$, we have 
\begin{align*}
V_{min}(\tau)-V_{min}(d_{min})+p_{j_1} & \leq V_{min}(d_{min}^{new})-V_{min}(d_{min})+ \\
& \hspace{40pt} +d_j-\delta \\
& \leq (\delta-d_{min})\cdot f(m,\varepsilon) \\
& < (\tau-d_{min})\cdot f(m,\varepsilon).
\end{align*}
Therefore, the acceptance does not change the validity of Eq.~(\ref{eq:V_cond_top}). For $J_{j_2}$, the previously discussed case applies. 
\qed 
\end{proof}
Due to Lemma~\ref{lem:valid}, Algorithms~\ref{alg:online} and \ref{alg:schedule} guarantee $V_{min}(\tau)|_t\leq (\tau-t)\cdot m$. Therefore, there is a valid preemptive schedule for the accepted jobs due to Theorem~\ref{thm:horn}, and Algorithm~\ref{alg:schedule} generates a valid schedule.

We now determine the performance of Algorithms~\ref{alg:online} and \ref{alg:schedule}. We begin with introducing a property of the generated schedule if all jobs have the same submission time. 
\begin{lemma}
\label{lem:perf_sched}
Consider a job set ${\mathcal{J}}$ with the common submission time $0$ for all jobs. If there is a valid schedule for all these jobs then the repeated execution of Algorithm~\ref{alg:schedule} generates a schedule $S$ for interval $[0,t)$ with 
\begin{align}
\label{eq:sparse}
P^*[0,t)({\mathcal{J}})-P[0,t)(S,{\mathcal{J}})\leq \frac{1}{4}\cdot P^*[0,t)({\mathcal{J}})
\end{align}
for any time $t>0$.
\end{lemma}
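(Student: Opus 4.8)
The plan is to analyze the structure of the schedule that Algorithm~\ref{alg:schedule} produces when all jobs share submission time $0$, and to show that the only processing time we "lose" relative to the optimum comes from intervals where idle machines cannot be filled because fewer than $m$ jobs remain with sufficiently large deadlines. First I would fix $t$ and split the interval $[0,t)$ into two kinds of subintervals according to the behavior of Algorithm~\ref{alg:schedule}: (i) \emph{saturated} subintervals in which all $m$ machines are busy, where trivially $P[\cdot]=m\cdot(\text{length})\geq P^*[\cdot]$; and (ii) \emph{LRPT-tail} subintervals, those governed by Lines~14--18, in which the number of uncompleted jobs has dropped to $k<m$ (or to $k+m_{LRPT}$ jobs after preallocation) so some machines run job parts with later deadlines or sit idle. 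Only the second kind can contribute to the gap $P^*-P$, so the whole argument reduces to bounding the loss in an LRPT-tail phase.

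Next I would set up the following local comparison inside an LRPT-tail phase starting at some time $\tau$: at $\tau$ there are at most $m$ (really $k$, after preallocation $k\le\nu$ has at most $m$) jobs contributing to $V_{min}(d_k)|_\tau$, and the algorithm runs each on its own machine plus applies LRPT to the contributions to $V_{min}(d_{k+1})|_\tau$. The key quantitative fact I want is that LRPT on a fixed set of remaining work is optimal for total volume processed by any deadline, so within the phase the algorithm's processed volume is as large as possible subject to the genuine constraint that only $k<m$ jobs (or a limited job set) are available. I would then argue that the \emph{ratio} of processed to optimal volume in such a phase is minimized when exactly one job with small remaining processing time is left to run alone against an optimum that packs $m$ machines for a while — and a short calculation (the $\tfrac14$ comes precisely from maximizing $\frac{(\text{loss})}{(\text{total})}$ of a triangular LRPT profile, where one job of length $\ell$ finishes alone while OPT could in principle have done up to $4\ell$ worth in parallel with slack) gives the factor $\tfrac14$. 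The slack condition $\varepsilon\le 1$ and Horn's criterion (Theorem~\ref{thm:horn}) guarantee the phase cannot be "too sparse" for too long, which is what pins the constant at $\tfrac14$ rather than something worse.

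Then I would globalize: summing $P^*[\cdot]-P[\cdot]\le\tfrac14 P^*[\cdot]$ over all the disjoint phases (saturated phases contribute $0$ to the left side) and using additivity of $P$ and $P^*$ over $[0,t)$ yields Eq.~(\ref{eq:sparse}). A subtlety to handle carefully is that $P^*$ is the optimal offline volume \emph{restricted to the interval} $[0,t)$, not the optimum over all of time, so I must make sure the per-phase inequality uses the restriction of OPT to that phase's subinterval and that these restrictions are themselves feasible volumes — this follows because truncating a valid schedule to a subinterval only removes work. I would also need the monotonicity properties (\ref{eq:Vx_value})--(\ref{eq:Vx_monotony}) and Lemma~\ref{lem:valid} to know $V_{min}(\tau)|_\tau\le(\tau'-\tau)\cdot m$ throughout, so that the algorithm's schedule is genuinely valid and LRPT is being applied to a schedulable instance.

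The hard part will be the second step: making the "worst LRPT-tail phase gives exactly $\tfrac14$" argument rigorous. The difficulty is that a tail phase is not a single clean event — Algorithm~\ref{alg:schedule} reinvokes itself each time a machine idles (Line~17), so the tail is a \emph{sequence} of short phases with shrinking job sets, and I must show the cumulative loss across this whole cascade still obeys the $\tfrac14$ bound rather than accumulating. I expect the right tool is a potential/charging argument: charge each unit of idle-or-misallocated machine time against OPT's processing in the \emph{same} subinterval, show the charging ratio never exceeds $\tfrac13$ locally (so that loss/total $\le\tfrac14$), and observe that the LRPT priority rule means once a phase becomes unsaturated the total remaining work is small enough (bounded via $f(m,\varepsilon)$ and $V_{min}$) that OPT cannot have much more than $\tfrac43$ times the algorithm's volume left to exploit. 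Verifying that this local ratio is tight and does not degrade under the cascade is where the real work lies.
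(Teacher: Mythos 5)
Your overall strategy --- decompose $[0,t)$ into saturated and unsaturated phases and prove the $\tfrac14$ bound locally in each phase --- does not go through, and it breaks exactly at the step you flag as ``the hard part.'' The per-phase inequality $P^*[\text{phase}]-P[\text{phase}]\le\tfrac14 P^*[\text{phase}]$ is simply false for an unsaturated tail phase, because the optimal schedule is free to \emph{defer} work into that subinterval. For instance, with $m$ unit jobs and one long job, all released at $0$ with generous deadlines, the algorithm finishes the unit jobs early and then runs the long job alone, while a valid schedule can postpone all $m$ unit jobs into the tail; the local loss ratio in the tail then approaches $\tfrac{m}{m+1}$, not $\tfrac13$. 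So no charging ``against OPT's processing in the same subinterval'' can succeed --- the tail's loss must be charged against the work accumulated in the saturated prefix, which is an inherently global comparison. Your appeal to the slack condition and Horn's criterion to ``pin the constant'' is also a red herring: the lemma holds for any feasibly schedulable job set and its proof uses neither $\varepsilon$ nor $f(m,\varepsilon)$.

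The paper's argument is global and much shorter. Let $\tau$ be the first time a machine idles in $S$, and note the structural fact your cascade worry overlooks: every job still incomplete at $\tau$ runs continuously on its own machine from $\tau$ until completion, with no further starts or preemptions, so the entire tail is a single clean configuration. If $k<m$ of these jobs are still incomplete at $t$, then $P[0,t)(S,{\mathcal{J}})\ge m\cdot\tau+k\cdot(t-\tau)$; since each of those $k$ jobs already receives $t-\tau$ units of processing in $S$ while no schedule can give any single job more than $t$ units inside $[0,t)$, and every other job is fully processed by $S$ within $[0,t)$, the loss satisfies $P^*[0,t)({\mathcal{J}})-P[0,t)(S,{\mathcal{J}})\le k\cdot\tau$. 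Hence
\begin{align*}
\frac{P^*[0,t)({\mathcal{J}})-P[0,t)(S,{\mathcal{J}})}{P^*[0,t)({\mathcal{J}})}\le\frac{k\cdot\tau}{m\cdot\tau+k\cdot(t-\tau)+k\cdot\tau}=\frac{k\cdot\tau}{m\cdot\tau+k\cdot t}\le\frac14,
\end{align*}
the maximum being attained at $k=m/2$ and $\tau=t/2$. This single counting step replaces your phase decomposition, the LRPT-optimality claim, and the potential/charging argument.
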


\begin{proof}
If no machine is idle in schedule $S$ then the claim clearly holds. Therefore, we assume time $\tau<t$ to be the first time in schedule $S$ with at least one machine being idle. Then  all jobs completing after time $\tau$ in schedule $S$ neither can start nor are preempted after time $\tau$. Let $k<m$ be the number of the jobs completing after time $t$ in $S$. Algorithm~\ref{alg:schedule} produces $P[0,t)(S,{\mathcal{J}})\geq m\cdot \tau +k\cdot (t-\tau)$.   

Since a job can contribute at most processing time $t$ to interval $[0,t)$ in any schedule, we have 
\begin{align*}
P^*[0,t)({\mathcal{J}})-P[0,t)(S,{\mathcal{J}})&\leq k\cdot \tau
\end{align*} 
resulting in
\begin{align*}
\frac{P^*[0,t)({\mathcal{J}})-P[0,t)(S,{\mathcal{J}})}{P^*[0,t)({\mathcal{J}})} & \leq \frac{k\cdot \tau}{P[0,t)(S,{\mathcal{J}})+k\cdot \tau} \\
& \leq \frac{k\cdot \tau}{m\cdot \tau +k\cdot (t-\tau)+k\cdot \tau} \\
& \leq \frac{k\cdot \tau}{m\cdot \tau +k\cdot t}\leq \frac{1}{4}.
\end{align*}
Due to inequalities $0<k< m$ and $0\leq \tau\leq t$, $k=0.5 \cdot m$ and $\tau=0.5\cdot t$ produce a maximum for $k\cdot \tau/(m\cdot \tau+k\cdot t)$. 
\qed
\end{proof}

Only Line 3 or Line 7 of Algorithm~\ref{alg:online} may lead to an increase of  $d_{min}$. We say an increase of $d_{min}$ is a \textit{time progression increase} if it occurs due to Line 3. Any other increase of $d_{min}|_t$ from time $\tau$ to time $\tau'$ does not involve progression of time and requires $V_{min}^{new}(\tau')|_t-V_{min}(\tau)|_t=(\tau'-\tau)\cdot f(m,\varepsilon)$. We use this distinction of increases of $d_{min}$ to prove the competitive ratio for Algorithms~\ref{alg:online} and \ref{alg:schedule}.
\begin{theorem}
\label{thm:cr_preemptive}
The $P_m|\varepsilon,\mbox{online},pmtn|\sum p_j\cdot(1-U_j)$ problem 
admits a deterministic online algorithm with competitive ratio at most
\begin{align*}
\frac{m}{f(m,\varepsilon)}& = m\cdot (1+\varepsilon)\cdot \left(\left(\frac{1+\varepsilon}{\varepsilon} \right)^{\frac{1}{m}}-1 \right).
\end{align*}
\end{theorem}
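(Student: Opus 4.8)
The plan is to compare the optimum against the online value through the threshold $d_{min}$. Write $\mathrm{OPT}=\mathrm{OPT}_{acc}+\mathrm{OPT}_{rej}$, where $\mathrm{OPT}_{acc}$ is the processing time that an optimal offline schedule spends on jobs that Algorithm~\ref{alg:online} accepts and $\mathrm{OPT}_{rej}$ is what it spends on the rejected jobs. Since Algorithm~\ref{alg:online} accepts the former jobs, $\mathrm{OPT}_{acc}\le\mathrm{ALG}:=\sum_{j:U_j=0}p_j$. Hence it suffices to bound $\mathrm{OPT}_{rej}$, and I would in fact aim for the sharper statement that $\mathrm{OPT}_{rej}$ together with the part of $\mathrm{OPT}_{acc}$ that an optimal schedule places inside a certain time region is at most $m$ times the measure of that region, which after rearranging gives $\mathrm{OPT}\le\tfrac{m}{f(m,\varepsilon)}\cdot\mathrm{ALG}$.

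First I would identify the rejected jobs. If Algorithm~\ref{alg:online} rejects $J_j$ then $d_j<d_{min}|_{r_j}$; since $d_j>r_j$ (the slack condition forces $d_j-r_j\ge(1+\varepsilon)p_j>0$) and Line~3 guarantees $d_{min}|_{r_j}\ge r_j$, we even get $d_{min}|_{r_j}>r_j$, i.e.\ at the release of $J_j$ the threshold is strictly \emph{ahead} of real time and the window $(r_j,d_{min}|_{r_j})$ contains the whole execution interval $[r_j,d_j)$. Let $\mathcal{C}\subseteq[0,\infty)$ be the union over all real times $t$ of the intervals $(t,d_{min}|_t]$ — the ``ahead region'' of the threshold trajectory. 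The observation above shows that \emph{every} rejected job runs entirely inside $\mathcal{C}$ in any schedule, so $\mathrm{OPT}_{rej}$ plus the part of $\mathrm{OPT}_{acc}$ lying inside $\mathcal{C}$ is at most $m\cdot|\mathcal{C}|$. Moreover, the complement of $\mathcal{C}$ inside $[0,D]$ (with $D$ the final value of $d_{min}$) is exactly the set of ``gaps'' created by the time‑progression updates of Line~3, and on those gaps an optimal schedule does at most $m$ units of work per unit time while Algorithm~\ref{alg:schedule} can only fall behind there if it is never idle. So the task reduces to bounding $|\mathcal{C}|$, equivalently the accumulated threshold increase produced by Line~7, against $\mathrm{ALG}/f(m,\varepsilon)$.

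For that bound I would use the invariant of Lemma~\ref{lem:valid}. At each acceptance, Lines~4 and~7 maintain $V_{min}(d_{min})|_{r_j}+V_\Delta=(d_{min}-r_j)\cdot f(m,\varepsilon)$, so raising the threshold from $d_{min}$ to $d_{min}^{new}$ raises $V_{min}(d_{min})|_{r_j}+V_\Delta$ by exactly $(d_{min}^{new}-d_{min})\cdot f(m,\varepsilon)$; the contribution of the newly accepted job to this increase is at most $p_j$, and the contribution of already committed work is non‑negative and, by Eq.~(\ref{eq:V_cond_top}), merely advanced from later. Telescoping these equalities over the accepted jobs (in release order) then yields $|\mathcal{C}|\cdot f(m,\varepsilon)\le\mathrm{ALG}$, where the leading slope $m$ of $V(x)$ in Eq.~(\ref{eq:Vx_definition}) is what lets the would‑be additive term $1$ be absorbed into the ratio. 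For the time‑progression increases of Line~3 I would argue that during the corresponding real‑time stretch no machine of Algorithm~\ref{alg:schedule} is idle — otherwise $d_{min}$ would not have lagged — so by the density estimate of Lemma~\ref{lem:perf_sched} the online schedule does full‑rate work there, matching what an optimal schedule can do in the same stretch and rendering the associated gap of $\mathcal{C}$ harmless. Combining the two parts gives $\mathrm{OPT}\le m\cdot\bigl(|\mathcal{C}|+|\text{gaps}|\bigr)\le\tfrac{m}{f(m,\varepsilon)}\cdot\mathrm{ALG}$.

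The main obstacle is the bookkeeping in the last step, and it has two genuinely delicate points. One is accepted jobs whose deadline exceeds the final threshold $D$: an optimal schedule may run such a job partly after $D$, i.e.\ outside $\mathcal{C}$, so the naive bound $\mathrm{OPT}\le mD$ does not hold on the nose and this excess must be charged against the unused slack of the ratio $\tfrac{m}{f(m,\varepsilon)}\ge1$ together with the fact (also coming from Lemma~\ref{lem:valid}) that $p_j\le(d_j-r_j)\cdot f(m,\varepsilon)$ for accepted jobs. The other is making the Line~3 increases pay for themselves rather than adding uncharged length to $d_{min}$, which is precisely where the density Lemma~\ref{lem:perf_sched} and the exact value $\tfrac{m}{f(m,\varepsilon)}$ — the ratio between the steepest secant slope $m$ of $V$ near $0$ and its shallowest slope $f(m,\varepsilon)$ for arguments $\ge1$ — are indispensable.
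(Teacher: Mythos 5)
Your region $\mathcal{C}$ (where the threshold runs ahead of real time) and its complement are essentially the paper's partition into \emph{rejection} and \emph{non-rejection} intervals, and your rate-$f(m,\varepsilon)$ accounting of the Line-7 increases is the right mechanism on $\mathcal{C}$. The genuine gap is in how you dispose of the complement. First, the claim that during a time-progression stretch ``no machine of Algorithm~\ref{alg:schedule} is idle --- otherwise $d_{min}$ would not have lagged'' is false: $d_{min}$ lags behind real time precisely when little work has been accepted recently, and the machines can be entirely idle there (e.g.\ when no job is submitted for a long period). Lemma~\ref{lem:perf_sched} does not say that the online schedule runs at full rate in such a stretch; it says the loss relative to the optimum is at most a quarter of the optimum, i.e.\ the \emph{ratio} there is at most $4/3$, not $1$. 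Consequently your final chain $\mathrm{OPT}\le m\cdot\left(|\mathcal{C}|+|\mathrm{gaps}|\right)\le\frac{m}{f(m,\varepsilon)}\cdot\mathrm{ALG}$ breaks at the second inequality: it would require $f(m,\varepsilon)\cdot|\mathrm{gaps}|$ to be covered by work of $\mathrm{ALG}$ not already consumed by your bound $f(m,\varepsilon)\cdot|\mathcal{C}|\le\mathrm{ALG}$, yet an instance with a few tiny jobs separated by long empty stretches has $|\mathrm{gaps}|$ unbounded while $\mathrm{ALG}$ stays tiny, so $m\cdot(|\mathcal{C}|+|\mathrm{gaps}|)$ exceeds $\frac{m}{f(m,\varepsilon)}\cdot\mathrm{ALG}$ by an arbitrary factor. (In such instances $\mathrm{OPT}$ is of course also tiny --- which is exactly why the comparison must be made ratio-wise inside each stretch rather than via a global volume bound against $m$ times the final threshold.)

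The repair is the paper's argument: compare $P^*$ and $P(S,\cdot)$ interval by interval. On a rejection interval the optimum gains at most $m$ per unit time while the online schedule executes at least $f(m,\varepsilon)$ per unit time (your telescoping of the Line-7 increases, with the $V_\Delta$ bookkeeping, is what justifies this), giving a local ratio of $m/f(m,\varepsilon)$; on a non-rejection interval no job with submission time or deadline inside it is rejected, so Lemma~\ref{lem:perf_sched} gives a local ratio of $4/3$. The step your outline never makes is the comparison $4/3\le 2\ln 2\le m/f(m,\varepsilon)$ for $\varepsilon\le 1$, which shows the rejection intervals dominate; this per-interval formulation also disposes of your other worry, accepted jobs whose deadlines exceed the final threshold, since wherever the optimum runs them the local ratio bound already applies.
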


\begin{proof}
We partition schedule $S$ generated by Algorithms~\ref{alg:online} and \ref{alg:schedule} into an alternating sequence of \textit{rejection} and \textit{non-rejection} intervals. A non-rejection interval starts with some $d_{min}$ value such that every increase of $d_{min}$ in this interval is a time progression increase. At its end, $d_{min}|_t=t$ holds.  We combine any neighboring non-rejection intervals into a single non-rejection interval. Algorithm~\ref{alg:online} does not reject any job with either its submission time or its deadline being within this interval. An interval separating two subsequent non-rejection intervals is a rejection interval and does not contain any time progression increase of $d_{min}$.

Schedule $S$ will either start with a rejection interval or with a non-rejection interval but it will always finish with a non-rejection interval. Due to Lemma~\ref{lem:perf_sched}, we have for any non-rejection interval $[t,t')$:
\begin{align*}
\frac{P^*[t,t')({\mathcal{J}})}{P[t,t')(S,{\mathcal{J}})} &\overset{Eq.(\ref{eq:sparse})}{\leq} \frac{4}{3}.
\end{align*}
For any rejection interval $[t,t')$, the increase of the $d_{min}$ guarantees:
\begin{align*}
\frac{P^*[t,t')({\mathcal{J}})}{P[t,t')(S,{\mathcal{J}})} &\leq \frac{ (t'-t)\cdot m}{(t'-t)\cdot f(m,\varepsilon)}= \frac{m}{f(m,\varepsilon)}.
\end{align*}
For $\varepsilon\leq 1$, we have $\frac{m}{f(m,\varepsilon)}\geq \lim_{m\rightarrow \infty} \frac{m}{f(m,1)}= 2\cdot \ln(2)>\frac{4}{3}$.
\qed
\end{proof}
Note that for $m=1$, Algorithms~\ref{alg:online} and \ref{alg:schedule} guarantee the same competitive ratio $\frac{1+\varepsilon}{\varepsilon}$ as the optimal greedy algorithm for a single machine.

\subsection*{Lower Bounds}
In this subsection, we provide a lower bound of the competitive ratio for the $P_m|\varepsilon,\mbox{online},pmtn|\sum p_j\cdot(1-U_j)$ problem.  First we describe the general concept. Our adversary controls the job sequence $\mathcal{J}$ that only uses jobs with release date $0$ and consists of up to $m+2$ blocks. In blocks 1 to $m+1$, the adversary submits identical jobs up to a maximum number and proceeds with the next block as soon as we have accepted a block specific target number of these jobs. If we do not accept this target number then the adversary stops after submitting the above mentioned maximum number of jobs of this block. In block $m+2$, the adversary unconditionally submits the corresponding maximum number of jobs and stops afterwards.  

The job of the first block has deadline $d_1=1+\varepsilon$ and an arbitrarily small positive processing time $p_1=\delta\ll 1$ such that the target number $\frac{\varepsilon}{\delta}\cdot\sum_{i=0}^{m-1} \left(\frac{1+\varepsilon}{\varepsilon}\right)^{\frac{i}{m}}$ is an integer. The maximum number of jobs is $\lfloor \frac{m\cdot (1+\varepsilon)}{\delta}\rfloor$. 

The next $m$ blocks are similar. The job of block $k$ with $2\leq k \leq m+1$ has a tight slack with processing time $p_k=\left(\frac{1+\varepsilon}{\varepsilon}\right)^{\frac{k-2}{m}}$. For these blocks, the target number is 1 while the maximum number is $\lfloor m\cdot (1+\varepsilon) \rfloor$. 

The job of the final block $m+2$ has a tight slack with processing time $p_{m+2}=\frac{1+\varepsilon}{\varepsilon}\cdot(1-\delta)$. The maximum number is $\lfloor m\cdot (1+\varepsilon) \rfloor$ as well.

Assume that we have accepted a total processing time of at most $P_{acc}$ and that we can arrange the submitted jobs to obtain a schedule with a total processing time of at least $P_{ideal}-const\cdot \delta$ then a lower bound of the competitive ratio for this scenario is $P_{ideal}/P_{acc}$ since $\delta$ is arbitrarily small. 

\begin{theorem}
\label{thm:lbound_prmp}
The competitive ratio for any deterministic online algorithm for the $P_m|\varepsilon,\mbox{online},\mbox{pmtn}|\sum p_j\cdot(1-U_j)$ problem is at least 
\begin{align*}
\frac{\lfloor m\cdot (1+\varepsilon)\rfloor}{\varepsilon\cdot\sum\limits_{i=0}^{m-1}\left(\frac{1+\varepsilon}{\varepsilon}\right)^{\frac{i}{m}}}= \lfloor m\cdot (1+\varepsilon)\rfloor \cdot \left(\left(\frac{1+\varepsilon}{\varepsilon}\right)^{\frac{1}{m}}-1\right)\\
\end{align*}
\end{theorem}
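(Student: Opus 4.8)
The plan is to analyze the adversarial job sequence described above and show that whatever the online algorithm does, the ratio $P_{ideal}/P_{acc}$ is bounded below by the claimed quantity. I would first set up notation: let $N_k$ denote the number of jobs of block $k$ that the algorithm accepts, and recall that block $1$ has jobs of size $\delta$ with target $\frac{\varepsilon}{\delta}\sum_{i=0}^{m-1}(\tfrac{1+\varepsilon}{\varepsilon})^{i/m}$, each block $k\in\{2,\dots,m+1\}$ has tight-slack jobs of size $(\tfrac{1+\varepsilon}{\varepsilon})^{(k-2)/m}$ with target $1$, and block $m+2$ has tight-slack jobs of size $\tfrac{1+\varepsilon}{\varepsilon}(1-\delta)$. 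The key structural fact I would establish is a \emph{schedulability ceiling}: because all jobs are released at time $0$ and block $k$'s job has deadline $(1+\varepsilon)p_k = (1+\varepsilon)(\tfrac{1+\varepsilon}{\varepsilon})^{(k-2)/m}$, the total processing time the algorithm can commit to among jobs of blocks $1$ through $k$ is capped by $m$ times that deadline (by Theorem~\ref{thm:horn} applied at $t=0$, or simply by the $m$-machine capacity up to the smallest relevant deadline). This means acceptance in early blocks consumes capacity that blocks the algorithm from accepting the larger, more valuable jobs later.

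The argument then splits into cases according to the first block $k^\star$ whose target the algorithm fails to reach (with $k^\star = m+2$ if it reaches every target, in which case the adversary plays all of block $m+2$). In each case I would compute $P_{acc}$ as an upper bound on what the algorithm accepted, and $P_{ideal}$ as the total processing time of a single well-chosen job size that the optimal offline schedule can pack into the $m$ machines up to that job's deadline — namely $\lfloor m(1+\varepsilon)\rfloor$ copies of the block-$k^\star$ job, whose deadline is exactly $(1+\varepsilon)$ times its size, so that $\lfloor m(1+\varepsilon)\rfloor$ copies of size $(\tfrac{1+\varepsilon}{\varepsilon})^{(k^\star-2)/m}$ fit (up to the $-\mathit{const}\cdot\delta$ slop, using that $\lfloor m(1+\varepsilon)\rfloor \le m(1+\varepsilon)$). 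The ratio in case $k^\star$ is then $P_{ideal}/P_{acc} = \lfloor m(1+\varepsilon)\rfloor\cdot(\tfrac{1+\varepsilon}{\varepsilon})^{(k^\star-2)/m} / P_{acc}$, and the heart of the calculation is to show $P_{acc} \le \varepsilon\sum_{i=0}^{m-1}(\tfrac{1+\varepsilon}{\varepsilon})^{i/m}$ — reusing identity~\eqref{eq:func_add}. For an intermediate $k^\star$, the accepted mass is at most the full block-$1$ target contribution (which is designed to equal $\varepsilon\cdot p_1^{-1}\cdot\sum$, i.e.\ $\tfrac{\varepsilon}{\delta}\sum$ jobs of size $\delta$, contributing $\varepsilon\sum_{i=0}^{m-1}(\tfrac{1+\varepsilon}{\varepsilon})^{i/m}$) plus one job from each of blocks $2,\dots,k^\star-1$ (sizes $(\tfrac{1+\varepsilon}{\varepsilon})^{0/m},\dots,(\tfrac{1+\varepsilon}{\varepsilon})^{(k^\star-3)/m}$) plus the at-most-target amount from block $k^\star$; the telescoping via~\eqref{eq:func_add} is arranged precisely so that this sum never exceeds $\varepsilon\sum_{i=0}^{m-1}(\tfrac{1+\varepsilon}{\varepsilon})^{i/m}$ relative to the available capacity $m(1+\varepsilon)(\tfrac{1+\varepsilon}{\varepsilon})^{(k^\star-2)/m}$. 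The final block $m+2$ is the case where the algorithm hit every target but now cannot fit enough of the huge size-$\tfrac{1+\varepsilon}{\varepsilon}(1-\delta)$ jobs, and one checks the same bound holds.

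The main obstacle I anticipate is the bookkeeping in the intermediate cases: one must argue simultaneously that (a) having met the targets of blocks $1,\dots,k^\star-1$ forces a certain minimum accepted mass, yet (b) the schedulability ceiling at the block-$k^\star$ deadline then leaves \emph{no room} to accept more than the target fraction of block $k^\star$, and (c) these two facts combine through identity~\eqref{eq:func_add} to give exactly the denominator $\varepsilon\sum_{i=0}^{m-1}(\tfrac{1+\varepsilon}{\varepsilon})^{i/m}$ independent of $k^\star$. The delicate point is that the target numbers for blocks $2,\dots,m+1$ are all $1$ while block $1$'s target is large; making the telescoping come out uniform across all cases is where the specific choice of $f(m,\varepsilon)$ and the geometric sizes $(\tfrac{1+\varepsilon}{\varepsilon})^{k/m}$ pays off, and I would verify the worst case is attained at the extreme block $k^\star=m+2$, yielding $\tfrac{P_{ideal}}{P_{acc}} \ge \tfrac{\lfloor m(1+\varepsilon)\rfloor}{\varepsilon\sum_{i=0}^{m-1}(\tfrac{1+\varepsilon}{\varepsilon})^{i/m}} = \lfloor m(1+\varepsilon)\rfloor\bigl((\tfrac{1+\varepsilon}{\varepsilon})^{1/m}-1\bigr)$ after letting $\delta\to 0$.
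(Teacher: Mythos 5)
Your proposal follows essentially the same route as the paper's proof: the same block-structured adversary, a case split on the first block whose target is missed, Horn's theorem to pack $\lfloor m\cdot(1+\varepsilon)\rfloor$ copies of that block's job into the optimal schedule, and the telescoping identity~(\ref{eq:func_add}) to evaluate the accepted mass. The only slip is notational: the accepted-mass bound must read $P_{acc}\leq p_{k^\star}\cdot\varepsilon\cdot\sum_{i=0}^{m-1}\bigl(\tfrac{1+\varepsilon}{\varepsilon}\bigr)^{i/m}$, scaled by the block-$k^\star$ processing time (which then cancels against $P_{ideal}$), exactly as your final ratio already reflects; also note that for the intermediate blocks the cap on acceptance comes from the adversary halting the block at its target rather than from a capacity ceiling, which is only needed for block $m+2$.
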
 

\begin{proof}
If the adversary stops before starting block 2 then the submitted total processing time is more than $m\cdot (1+\varepsilon)-\delta$ while the algorithm has accepted a total processing time of at most $\varepsilon\cdot\sum_{i=0}^{m-1} \left(\frac{1+\varepsilon}{\varepsilon}\right)^{\frac{i}{m}}-\delta$ resulting in a lower bound of 
\begin{align*}
\frac{m\cdot (1+\varepsilon)}{\varepsilon\cdot\sum\limits_{i=0}^{m-1} \left(\frac{1+\varepsilon}{\varepsilon}\right)^{\frac{i}{m}}}& = m\cdot (1+\varepsilon) \cdot \left(\left(\frac{1+\varepsilon}{\varepsilon}\right)^{\frac{1}{m}}-1\right).
\end{align*}
 
If the adversary stops before starting block $k+1$ with $2\leq k\leq m+1$ then the algorithm has accepted a total processing time of $p_k\cdot \varepsilon\cdot \sum_{i=0}^{m-1}\left(\frac{1+\varepsilon}{\varepsilon}\right)^{\frac{i}{m}}$ due to $p_2=1$ and  
\begin{align}
\nonumber
P_h & = p_h\cdot \varepsilon\cdot \sum_{i=0}^{m-1}\left(\frac{1+\varepsilon}{\varepsilon}\right)^{\frac{i}{m}}+p_{h} \\ \nonumber
&= p_h\cdot\varepsilon\cdot \left(\sum\limits_{i=0}^{m-1}\left(\frac{1+\varepsilon}{\varepsilon}\right)^{\frac{i}{m}}+\frac{1}{\varepsilon}\right) \\ \nonumber
& \overset{Eq.(\ref{eq:func_add})}{=}p_h\cdot\varepsilon\cdot\sum\limits_{i=1}^{m}\left(\frac{1+\varepsilon}{\varepsilon}\right)^{\frac{i}{m}}\\
\label{eq:lb_intermediate}
&=p_h\cdot \left(\frac{1+\varepsilon}{\varepsilon}\right)^{\frac{1}{m}}\cdot \varepsilon\cdot \sum_{i=0}^{m-1}\left(\frac{1+\varepsilon}{\varepsilon}\right)^{\frac{i}{m}} \\ \nonumber  
 & =p_{h+1}\cdot \varepsilon\cdot \sum_{i=0}^{m-1}\left(\frac{1+\varepsilon}{\varepsilon}\right)^{\frac{i}{m}}  \mbox{ for }2\leq h<k.
\end{align}

Due to Theorem~\ref{thm:horn}, we can use  at least all $\lfloor m\cdot(1+\varepsilon) \rfloor$ jobs with processing time $p_k$ in the optimal schedule. This usage results in a lower bound of 
\begin{align*}
\frac{p_k\cdot \lfloor m\cdot (1+\varepsilon)\rfloor}{p_k\cdot \varepsilon\cdot\sum\limits_{i=0}^{m-1} \left(\frac{1+\varepsilon}{\varepsilon}\right)^{\frac{i}{m}}}& =  \frac{\lfloor m\cdot (1+\varepsilon)\rfloor}{\varepsilon\cdot\sum\limits_{i=0}^{m-1} \left(\frac{1+\varepsilon}{\varepsilon}\right)^{\frac{i}{m}}}\\
&= \lfloor m\cdot (1+\varepsilon)\rfloor \cdot \left(\left(\frac{1+\varepsilon}{\varepsilon}\right)^{\frac{1}{m}}-1\right).
\end{align*}
 
If the adversary starts with block $m+2$ then the algorithm has accepted exactly one job with deadline $d_k$ for $3\leq k\leq m+1$ while all other jobs have deadline $d_1=d_2$. The total accepted processing time is
\begin{align*}
P_{acc} & = p_{m+1}\cdot \varepsilon\cdot \sum_{i=0}^{m-1}\left(\frac{1+\varepsilon}{\varepsilon}\right)^{\frac{i}{m}}+p_{m+1}\\ 
& \overset{Eq.(\ref{eq:lb_intermediate})}{=}p_{m+1}\cdot \left(\frac{1+\varepsilon}{\varepsilon}\right)^{\frac{1}{m}}\cdot \varepsilon \cdot \sum_{i=0}^{m-1}\left(\frac{1+\varepsilon}{\varepsilon}\right)^{\frac{i}{m}}  \\
& = \left(\frac{1+\varepsilon}{\varepsilon}\right)^{\frac{m-1}{m}}\cdot \varepsilon \cdot \sum_{i=1}^{m}\left(\frac{1+\varepsilon}{\varepsilon}\right)^{\frac{i}{m}} \\
& = (1+\varepsilon)\cdot \sum_{i=1}^{m}\left(\frac{1+\varepsilon}{\varepsilon}\right)^{\frac{i-1}{m}} = \sum_{i=1}^{m}d_{i+1}
\end{align*}
resulting in no idle time on any machine in interval $[0,d_1)$. Since any job of block $m+2$ cannot start later than time 
\begin{align*}
d_{m+2}-p_{m+2} & =  \varepsilon\cdot p_{m+2}=  \varepsilon\cdot \left(\frac{1+\varepsilon}{\varepsilon}\cdot(1-\delta)\right) \\
&=(1+\varepsilon)\cdot (1-\delta)<d_1,
\end{align*}
we cannot accept any job of block $m+2$. 

Again, the total processing time of the optimal schedule is at least $\lfloor m\cdot(1+\varepsilon) \rfloor\cdot p_{m+2}$ resulting in a lower bound of 
\begin{align*}
 \frac{p_{m+2}\cdot \lfloor m\cdot (1+\varepsilon)\rfloor}{(1+\varepsilon) \cdot \sum\limits_{i=1}^{m}\left(\frac{1+\varepsilon}{\varepsilon}\right)^{\frac{i-1}{m}}}& =  \frac{\frac{1+\varepsilon}{\varepsilon}\cdot(1-\delta)\lfloor m\cdot (1+\varepsilon)\rfloor}{(1+\varepsilon)\cdot\sum\limits_{i=0}^{m-1} \left(\frac{1+\varepsilon}{\varepsilon}\right)^{\frac{i}{m}}} \\
& = \frac{\lfloor m\cdot (1+\varepsilon)\rfloor}{\varepsilon\cdot\sum\limits_{i=0}^{m-1} \left(\frac{1+\varepsilon}{\varepsilon}\right)^{\frac{i}{m}}} \\
& \overset{\delta\rightarrow 0}{=} \lfloor m\cdot (1+\varepsilon)\rfloor \cdot \left(\left(\frac{1+\varepsilon}{\varepsilon}\right)^{\frac{1}{m}}-1\right)
\end{align*}
\qed
\end{proof}
Theorems~\ref{thm:lbound_prmp} and \ref{thm:cr_preemptive} show that Algorithms~ \ref{alg:online} and \ref{alg:schedule} guarantee a tight (for $m\cdot (1+\varepsilon)$ being an integer) or an almost tight competitive ratio for $\varepsilon\leq 1$. 

We can increase the lower bound to 
\begin{align*}
\max\left\{m\cdot (1+\varepsilon), \lfloor m\cdot (1+\varepsilon) \rfloor + \frac{\varepsilon}{1+\varepsilon}\cdot \left(\frac{1+\varepsilon}{\varepsilon}\right)^{\frac{1}{m}}\right\}
\end{align*}  
by adding accepted jobs with deadline $d_1$ in the optimal schedule if there are idle machines. The new lower bound reduces the gap for non-integral values of $m\cdot (1+\varepsilon)$ but cannot guarantee a tight lower bound for all those values. 

Algorithms~\ref{alg:online} and \ref{alg:schedule} guarantee a better competitive ratio than the known greedy approach for all values of $\varepsilon$ and $m>1$. In comparison to the single machine problem, the competitive factor is significantly better for small values of $\varepsilon$. Although we focus on small values of $\varepsilon$, we like to remark that for the parallel problem, the competitive ratio cannot approach 1 for large values of $\varepsilon$ due to the scheduling of the jobs (see Algorithm~\ref{alg:schedule}) while the competitive ratio tends to 1 for $\varepsilon\rightarrow \infty$ in the single machine problem.  

\section{Online Scheduling without Preemption}
\label{sec:nonpreemptive}

In this section, we present an online algorithm for the non-preemptive version. Our non-preemptive Algorithm~\ref{alg:nonpreemptive} Online Allocation also uses a deadline threshold and determines this threshold based on an exponential sequence of processing times on the machines. Remember that in our online model, the algorithm must decide immediately after submission of a job whether to accept or reject the job. If we accept the job then we also must immediately allocate the job to a specific machine with a specific starting time and cannot change this allocation later. Once we have allocated a job to a machine, we start this job as early as possible on this machine. Therefore, it is not necessary to consider the job schedule but only the not yet completed load $l(m_i)$ on each machine $m_i$. We index the machines in decreasing order of their loads: $l(m_1)\geq l(m_2)\geq \ldots \geq l(m_{m-1})\geq l(m_m)$. We use the notation $l(m_i)|_{t}$ to describe the value $l(m_i)$ at a specific time $t$ although we are aware that there may be different load values at time $t$ due to the acceptance of jobs. For disambiguation, we apply a similar notation $l(m_i)|_{J_j}$ to indicate the value $l(m_i)$ when making the decision to accept job $J_j$. $l(m_i)|_{J_j}'$ describes the update of the load $l(m_i)$ after accepting job $J_j$. Note that the allocation of $J_j$ to a machine may result in a re-indexing of the machines. This re-indexing is already considered in $l(m_i)|_{J_j}'$. 

The next two expressions specify our deadline threshold:
\begin{align}
\label{eq:def_dilim}
d^{i}_{lim}|_{t} & =l(m_i)|_{t}\left(\frac{1+\varepsilon}{\varepsilon}\right)^{\frac{i}{m}}+t  \\
\label{eq:def_dlim} 
d_{lim}|_t & =\max_{1\leq i\leq m}d^{i}_{lim}|_t 
\end{align}
If $d_{lim}|_t=d^{i}_{lim}|_t$ holds then we say that machine $m_i$ determines $d_{lim}$. We use the notation $d_{lim}|_{J_j}'$ to denote the value of $d_{lim}|_{J_j}$ after the acceptance of job $J_j$.   

In Algorithm~\ref{alg:nonpreemptive}, Line 1, we initially  set $d_{lim}$ to 0. At the submission of a new job $J_j$, we first update $d_{lim}$ to consider any progression of time since the previous submission (Line 3). If the resulting $d_{lim}|_{J_j}>d_j$ holds then we reject $J_j$. Otherwise we accept $J_j$ and allocate it to the machine that produces the smallest value $d_{lim}|_{J_j}'$ (Lines 7 to 9). Contrary to Algorithm~\ref{alg:nonpreemptive}, Algorithm~\ref{alg:online} allows a decrease of $d_{lim}$ due to progression of time. Therefore, there is no need for any compensation value.  

\begin{algorithm}[ht]
\caption{Online Allocation}
\label{alg:nonpreemptive}
\begin{algorithmic}[1]
\STATE{$d_{lim}=0$}
\FOR{each newly submitted job $J_j$}
\STATE{update $d_{lim}$}
\IF{$d_j< d_{lim}$}
\STATE{reject $J_j$}
\ELSE
\STATE{accept $J_j$}
\STATE{allocate $J_j$ to a machine producing minimal $d_{lim}$}
\STATE{update $d_{lim}$}
\ENDIF
\ENDFOR
\end{algorithmic}
\end{algorithm}

\begin{lemma}
\label{lem:legal_allocation}
There is a legal schedule for all jobs accepted by Algorithm~\ref{alg:nonpreemptive}.
\end{lemma}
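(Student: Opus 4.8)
The plan is to show by induction over the job sequence that after each acceptance, the load configuration still admits a legal non-preemptive schedule, where ``legal'' means every accepted job finishes by its deadline if, as the algorithm dictates, each job is started as early as possible on the machine to which it was assigned. Since a job assigned to machine $m_i$ at time $t$ finishes at time $t + l(m_i)|_t'$ (the updated load), legality of the whole schedule is equivalent to the invariant that, at every moment, each machine's remaining load never pushes an accepted job past its deadline. Concretely I would carry the invariant $d_{lim}|_t \le d_j$ for every still-active accepted job $J_j$, together with the structural fact that the load on the machine determining $d_{lim}^{i}$ bounds the completion time of whatever job currently sits at the end of that machine's queue. The key inequality to exploit is the defining relation~(\ref{eq:def_dilim})--(\ref{eq:def_dlim}): because $\left(\frac{1+\varepsilon}{\varepsilon}\right)^{i/m} \ge \left(\frac{1+\varepsilon}{\varepsilon}\right)^{1/m} > 1$ for $i\ge 1$, we always have $d_{lim}|_t \ge l(m_1)|_t\cdot\left(\frac{1+\varepsilon}{\varepsilon}\right)^{1/m} + t > l(m_1)|_t + t$, so the deadline threshold dominates the completion time $t + l(m_1)|_t$ of the most heavily loaded machine.

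The inductive step splits into the two events, as in the proof of Lemma~\ref{lem:valid}. For \emph{progression of time} from $t$ to $t'$, each load drops by exactly $t'-t$ on a busy machine (and cannot go below $0$), while the offset term in~(\ref{eq:def_dilim}) increases by $t'-t$; since $\left(\frac{1+\varepsilon}{\varepsilon}\right)^{i/m}\ge 1$, the net effect is that every $d_{lim}^{i}$ is non-increasing under pure time progression, so the invariant $d_{lim}\le d_j$ is preserved for all jobs that remain active, and jobs that complete simply drop out. For \emph{acceptance of a new job $J_j$ at time $t=r_j$}, the algorithm only accepts when $d_{lim}|_{J_j}\le d_j$, and it places $J_j$ on the machine minimizing $d_{lim}|_{J_j}'$. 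The job I must worry about is $J_j$ itself: its completion time is $r_j + l(m_i)|_{J_j}' $ where $m_i$ is the chosen machine after re-indexing. I would argue that placing $J_j$ on (what was) the least-loaded machine gives $l(m_i)|_{J_j}' = l(m_m)|_{J_j} + p_j$, and that the slack condition $d_j \ge (1+\varepsilon)p_j + r_j$ together with $d_{lim}|_{J_j}\le d_j$ forces this sum to be small enough; here the exponential ladder of exponents $\tfrac{i}{m}$ is exactly what is needed so that adding $p_j$ to the smallest load and re-sorting keeps every $d_{lim}^{i}$ bounded by the old $d_{lim}$ or by $d_j$. For the previously accepted jobs, their deadlines were $\ge$ the old $d_{lim}$, and the acceptance rule guarantees the new $d_{lim}$ does not exceed $d_j$; I would need to check that the re-indexing after insertion cannot create a machine index $i$ whose load, scaled by $\left(\frac{1+\varepsilon}{\varepsilon}\right)^{i/m}$, overshoots — this is where the choice of the ``min $d_{lim}$'' machine in Line~8 is essential.

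The main obstacle I anticipate is precisely the re-indexing bookkeeping in the acceptance step: inserting $p_j$ into the sorted load vector can shift several machines' indices, and one must verify that no shifted machine violates the threshold bound. The clean way to handle this is a short exchange/monotonicity argument showing that the machine minimizing $d_{lim}|_{J_j}'$ is the one whose load after insertion is smallest in the appropriate weighted sense, and that for that choice the new sorted vector is dominated (coordinate-wise, after weighting by the exponents) by a configuration already known to be legal — essentially mimicking the role Eq.~(\ref{eq:Vx_monotony}) plays in the preemptive proof. Once that domination is in place, the bound $d_{lim}\le (\text{anything that was }\ge d_{lim}\text{ before})$ and the slack condition close the induction, and an appeal to the definition of a legal non-preemptive schedule (start each job as early as possible) finishes the lemma. \qed
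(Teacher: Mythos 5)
Your proposal has a genuine gap, and the framework you build is both heavier and shakier than what the lemma needs. The central problem is your proposed invariant ``$d_{lim}|_t \le d_j$ for every still-active accepted job $J_j$'': this is false. The acceptance test compares $d_j$ with $d_{lim}$ \emph{before} the update in Line~9 of Algorithm~\ref{alg:nonpreemptive}; immediately after acceptance the threshold is recomputed from the increased loads and will in general exceed $d_j$ (that is precisely how the algorithm comes to reject later jobs). So the induction as you set it up breaks at the first acceptance. Moreover, the whole inductive apparatus --- tracking previously accepted jobs, the monotonicity of $d^{i}_{lim}$ under time progression, and the re-indexing bookkeeping you flag as the main obstacle --- is unnecessary: in this non-preemptive model with immediate machine and start-time commitment, a job's start and completion times are frozen at the moment of acceptance (it starts as soon as the load already queued on its machine drains), so no later event can invalidate an earlier job. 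The only thing to verify is that the job being accepted \emph{right now} meets its own deadline; this is a single computation, not an induction over the sequence.

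That computation is also where your sketch goes wrong quantitatively. The only explicit inequality you write is $d_{lim}|_t \ge l(m_1)|_t\cdot\left(\frac{1+\varepsilon}{\varepsilon}\right)^{1/m}+t$ for the most loaded machine; the exponent $\frac{1}{m}$ is too weak to close the argument (it would require $\left(\frac{\varepsilon}{1+\varepsilon}\right)^{1/m}\le\frac{\varepsilon}{1+\varepsilon}$, which fails for $m>1$). The paper instead looks at the \emph{least} loaded machine $m_m$, whose index carries the full exponent $\frac{m}{m}=1$ in Eq.~(\ref{eq:def_dilim}), giving $l(m_m)|_{r_j}\le (d_{lim}|_{r_j}-r_j)\cdot\frac{\varepsilon}{1+\varepsilon}$. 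Combined with the acceptance condition $d_{lim}|_{r_j}\le d_j$ and the slack condition $p_j\le\frac{d_j-r_j}{1+\varepsilon}$, the completion time of $J_j$ on $m_m$ is at most $(d_j-r_j)\cdot\frac{\varepsilon}{1+\varepsilon}+r_j+\frac{d_j-r_j}{1+\varepsilon}=d_j$; this three-line chain is the entire proof, and it also shows that the set of machines eligible in Line~8 is nonempty. You correctly named the three ingredients --- least-loaded machine, acceptance test, slack --- but you never carry out this arithmetic, and the one bound you do state would not suffice.
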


\begin{proof}
Acceptance of job $J_j$ requires $d_j\geq d_{lim}|_{J_j}$. If $J_j$ is allocated to machine $m_m$ then it completes at 
\begin{align*}
l(m_m)|_{J_j}+r_j+p_j &\overset{Eq.(\ref{eq:def_dilim})}{=}(d^m_{lim}|_{r_j}-r_j)\cdot\frac{\varepsilon}{1+\varepsilon}+r_j+p_j \\
& \overset{Eq.(\ref{eq:def_dlim})}{\leq}(d_{lim}|_{r_j}-r_j)\cdot\frac{\varepsilon}{1+\varepsilon}+r_j+p_j\\
& \leq (d_j-r_j)\cdot \frac{\varepsilon}{1+\varepsilon} +r_j+ \frac{d_j-r_j}{1+\varepsilon} \leq  d_j.
\end{align*}
\qed
\end{proof}
For the proof of the competitive factor, we first establish a property of the schedule.
\begin{lemma}
\label{lem:property_allocation}
Algorithm~\ref{alg:nonpreemptive} always guarantees 
\begin{align}
\label{eq:sum_lm}
l(m_1)|_{t}+l(m_2)|_{t} & \geq (d_{lim}|_{t}-t)\cdot \left(\frac{\varepsilon}{1+\varepsilon} \right)^{\frac{1}{m}}.
\end{align}
\end{lemma}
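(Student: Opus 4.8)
The plan is to prove Eq.~(\ref{eq:sum_lm}) by induction over the events that change the machine loads or $d_{lim}$: progression of time and acceptance of a new job. I first note that the claim trivially holds at initialization, where all loads and $d_{lim}$ are $0$. For the inductive step, I assume the inequality holds before an event and verify it afterwards.

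\medskip

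\noindent\textbf{Progression of time.} Suppose time advances from $t$ to $t'$ with no job submitted in between. Each machine load decreases at rate $1$ until it hits $0$, and $d_{lim}$ is recomputed via Lines 3 and \ref{eq:def_dlim}. The key observation is that $d^{i}_{lim}|_{t'}-t' = l(m_i)|_{t'}\cdot\left(\frac{1+\varepsilon}{\varepsilon}\right)^{i/m}$, and $l(m_i)$ drops by at most $t'-t$ (exactly $t'-t$ while positive), so $d^{i}_{lim}|_{t'}$ is non-increasing in real terms; hence $d_{lim}|_{t'}-t'\le d_{lim}|_t-t$ (it may drop faster because the maximizing index can change or a load can reach $0$). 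Meanwhile $l(m_1)|_t+l(m_2)|_t$ decreases by at most $2(t'-t)$. I would show that the right-hand side $(d_{lim}-t)\left(\frac{\varepsilon}{1+\varepsilon}\right)^{1/m}$ never decreases by less than the left-hand side does. If both of the two largest loads stay positive throughout $[t,t')$, the left side decreases by exactly $2(t'-t)$ while, since $\left(\frac{\varepsilon}{1+\varepsilon}\right)^{1/m}\le 1$ and the relevant $d^{i}_{lim}$ decreases by at least $(t'-t)$ in real terms when $i\le m$ is driven by machine $1$ or $2$, I need to handle the case where $d_{lim}$ is determined by a machine $m_i$ with small $i$. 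The cleanest argument: after progression we can simply re-derive the inequality from the post-event load configuration using the acceptance-step argument below applied to the ``empty'' submission, so it suffices to check that the configuration reached by pure time progression is one that the inductive hypothesis plus monotonicity already covers; concretely, $d^{i}_{lim}|_{t'}-t'\le l(m_i)|_t\left(\frac{1+\varepsilon}{\varepsilon}\right)^{i/m}-(t'-t)\cdot\mathbf{1}[l(m_i)|_t>0]$ handles each index, and combined with the hypothesis at $t$ this yields the claim at $t'$.

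\medskip

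\noindent\textbf{Acceptance of a job.} When $J_j$ with release date $t$ is accepted, it is placed on the machine minimizing $d_{lim}|_{J_j}'$. Here I distinguish cases by which machine determines the new $d_{lim}$. If the job lands on the current $m_m$ (smallest load), then $l(m_m)$ increases by $p_j$ and the machines re-index; I must argue that either the two largest loads are essentially unchanged (so the left side is unchanged while $d_{lim}$ can only have grown by the ``balanced'' amount captured by the $\left(\frac{\varepsilon}{1+\varepsilon}\right)^{1/m}$ factor — this is exactly where the geometric spacing $\left(\frac{1+\varepsilon}{\varepsilon}\right)^{i/m}$ pays off), or the new load on $m_m+p_j$ becomes one of the two largest, in which case it contributes directly. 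The crucial inequality is that the algorithm's greedy choice guarantees $d_{lim}|_{J_j}'\le d^{i}_{lim}$ computed as if $J_j$ were placed on machine $m_{i-1}$ for each $i$, which forces a near-geometric profile of loads; summing or comparing consecutive terms then gives that $d_{lim}|_{J_j}'-t\le (l(m_1)'+l(m_2)')\left(\frac{1+\varepsilon}{\varepsilon}\right)^{1/m}$, equivalently Eq.~(\ref{eq:sum_lm}) after dividing. I would make this precise by showing $d_{lim}' = d^{k}_{lim}{}'$ for some $k$, and that $l(m_k)' \le l(m_{k-1})'$ with the ratio bounded so that $l(m_k)'\left(\frac{1+\varepsilon}{\varepsilon}\right)^{k/m} \le (l(m_1)'+l(m_2)')\left(\frac{1+\varepsilon}{\varepsilon}\right)^{1/m}$; the base of the induction for small $k$ (namely $k\le 2$) is immediate, and for larger $k$ the greedy minimality of the placement propagates the bound down from $k$ to $2$.

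\medskip

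\noindent The main obstacle I anticipate is the acceptance step when $d_{lim}$ is determined by a machine with a large index $i$: one must exploit that the greedy rule, by always choosing the machine yielding the \emph{smallest} $d_{lim}'$, keeps the loads in a controlled geometric ratio $l(m_{i-1})/l(m_i)\approx\left(\frac{1+\varepsilon}{\varepsilon}\right)^{1/m}$, so that the single dominant term $l(m_k)'\left(\frac{1+\varepsilon}{\varepsilon}\right)^{k/m}$ can be charged to the top two loads. Pinning down this geometric-profile invariant — essentially that $d^{i}_{lim}{}'$ cannot exceed $d^{i-1}_{lim}{}'$ by more than a factor $\left(\frac{1+\varepsilon}{\varepsilon}\right)^{1/m}$ at the chosen machine — is the heart of the argument and will require carefully using the definition of the minimizing allocation in Line 8 of Algorithm~\ref{alg:nonpreemptive}.
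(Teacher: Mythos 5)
There is a genuine gap in both halves of your induction, even though the overall skeleton (induction over \emph{progression of time} and \emph{acceptance of a job}) matches the paper's. For the time-progression step, the additive rate comparison you sketch does not close: from $d^{i}_{lim}|_{t'}-t'=(d^{i}_{lim}|_{t}-t)-(t'-t)\cdot\left(\frac{1+\varepsilon}{\varepsilon}\right)^{i/m}$, the right-hand side of Eq.~(\ref{eq:sum_lm}) drops by only $(t'-t)\cdot\left(\frac{1+\varepsilon}{\varepsilon}\right)^{(i-1)/m}$, which is \emph{smaller} than the $2(t'-t)$ lost by $l(m_1)+l(m_2)$ whenever $\left(\frac{1+\varepsilon}{\varepsilon}\right)^{(i-1)/m}<2$, so ``RHS never decreases by less than the LHS'' is simply false for the determining index in general. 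The paper instead argues multiplicatively: it rescales the inductive hypothesis $d^{i}_{lim}|_{t}-t\leq(l(m_1)|_t+l(m_2)|_t)\left(\frac{1+\varepsilon}{\varepsilon}\right)^{1/m}$ by the factor $l(m_i)|_{t'}/l(m_i)|_{t}$ and uses $l(m_1)|_t+l(m_2)|_t\geq 2\,l(m_i)|_t$ for $i\geq 2$ to convert the proportional loss into the $2(t'-t)$ loss of the top two loads. Your fallback of ``re-deriving the inequality via the acceptance argument applied to an empty submission'' cannot work either, because the acceptance argument essentially uses the deadline of the accepted job.

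The acceptance step is where the more serious problem lies: the ``controlled geometric profile'' invariant you want to lean on is not maintained by the algorithm. The only version of it that holds unconditionally, $d^{i}_{lim}-t\leq(d^{i-1}_{lim}-t)\left(\frac{1+\varepsilon}{\varepsilon}\right)^{1/m}$ (which follows trivially from $l(m_i)\leq l(m_{i-1})$), chains to the far weaker bound $d_{lim}-t\leq l(m_1)\cdot\frac{1+\varepsilon}{\varepsilon}$; and an arbitrary load profile (e.g., all loads equal) would actually \emph{violate} Eq.~(\ref{eq:sum_lm}) for small $\varepsilon$ and large $m$, so the lemma cannot be derived from any profile property alone — it depends on which states are reachable. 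The ingredient your proposal never uses is the acceptance condition $d_j\geq d_{lim}|_{J_j}$. The paper splits on whether $l(m_1)|_{J_j}+p_j+r_j\leq d_j$: if so, greedy minimality of Line 8 compared against placing $J_j$ on $m_1$ bounds $d_{lim}|_{J_j}'$ directly by $(l(m_1)|_{J_j}+p_j)\left(\frac{1+\varepsilon}{\varepsilon}\right)^{1/m}+r_j$; if not, then $d_{lim}|_{J_j}-r_j\leq d_j-r_j<l(m_1)|_{J_j}+p_j$, and since re-indexing moves the determining machine by at most one position, the new $d_{lim}$ exceeds the old one by at most a single factor $\left(\frac{1+\varepsilon}{\varepsilon}\right)^{1/m}$. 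Without invoking $d_j$ there is no way to charge $l(m_h)'\left(\frac{1+\varepsilon}{\varepsilon}\right)^{h/m}$ for large $h$ to the top two loads (and, contrary to your remark, even $k=2$ is not immediate, since it needs $2\geq\left(\frac{1+\varepsilon}{\varepsilon}\right)^{1/m}$, which fails for small $\varepsilon$).
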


\begin{proof}
The claim holds whenever there is no load on any machine or if machine $m_1$ determines $d_{lim}$ due to Eq.~(\ref{eq:def_dilim}). Similar to Lemma~\ref{lem:valid}, we distinguish the events progression of time and acceptance of a new job. 

First we consider progression from time $t$ to time $t'$ and assume that the claim holds at time $t$. For every machine $m_i$ with $l(m_i)|_t\leq t'$, it leads to $l(m_i)|_{t'}=0\overset{Eq.(\ref{eq:def_dilim})}{\Rightarrow} d^{i}_{lim}|_{t'}=t'$. For all other machines, we have  
\begin{align*}
d^{i}_{lim}|_{t'}-t'& \overset{Eq.(\ref{eq:def_dilim})}{=}l(m_i)|_{t'} \cdot \left(\frac{1+\varepsilon}{\varepsilon} \right)^{\frac{i}{m}}\\
& =\left(l(m_i)|_{t}-(t'-t)\right)\cdot \left(\frac{1+\varepsilon}{\varepsilon} \right)^{\frac{i}{m}}\\
& = l(m_i)|_{t}\cdot\left(\frac{1+\varepsilon}{\varepsilon} \right)^{\frac{i}{m}}-(t'-t)\cdot\left(\frac{1+\varepsilon}{\varepsilon} \right)^{\frac{i}{m}} \\
& \overset{Eq.(\ref{eq:def_dilim})}{=} (d^{i}_{lim}|_{t}-t)-(t'-t)\cdot\left(\frac{1+\varepsilon}{\varepsilon} \right)^{\frac{i}{m}}.
\end{align*}
The function $d^{i}_{lim}|_{t'}-t'-\left(d^{i}_{lim}|_t-t\right)=-(t'-t)\cdot\left(\frac{1+\varepsilon}{\varepsilon} \right)^{\frac{i}{m}}$ is decreasing in $i$. Therefore, progression of time does not require a change in the machine order. In addition, machine $m_1$ determines $d_{lim}|_{t'}$ if it has already determined $d_{lim}|_t$. If machine $m_1$ does not determine $d_{lim}|_{t'}$ then we apply induction and assume for all $i>1$ with $l(m_i)|_{t}>t'-t$
\begin{align*}
d^{i}_{lim}|_{t}-t & \leq \left(l(m_1)|_{t}+l(m_2)|_{t}\right) \cdot \left(\frac{1+\varepsilon}{\varepsilon} \right)^{\frac{1}{m}}. 
\end{align*}
Then Eq.(\ref{eq:def_dilim}) yields
\begin{align*}
\frac{d^{i}_{lim}|_{t}-t}{l(m_i)|_{t}}\cdot l(m_i)|_{t'} & \leq \frac{l(m_1)|_t+l(m_2)|_t}{l(m_i)|_t}\cdot l(m_i)|_{t'}\cdot \\
& \hspace{40pt} \cdot \left(\frac{1+\varepsilon}{\varepsilon} \right)^{\frac{1}{m}}\\
 \left(\frac{1+\varepsilon}{\varepsilon} \right)^{\frac{i}{m}}\cdot l(m_i)|_{t'} &\leq \left(l(m_1)|_t+l(m_2)|_t\right)\cdot \left(1 - \frac{t'-t}{l(m_i)|_t}\right) \cdot \\
 & \hspace{40pt} \cdot \left(\frac{1+\varepsilon}{\varepsilon} \right)^{\frac{1}{m}}\\
 d^{i}_{lim}|_{t'} -t'& \leq  \left(l(m_1)|_t+l(m_2)|_t- 2 (t'-t)\right) \cdot \\
 & \hspace{40pt} \cdot \left(\frac{1+ \varepsilon}{\varepsilon} \right)^{\frac{1}{m}} \\
& \leq \left(l(m_1)|_{t'}+l(m_2)|_{t'}\right) \cdot \left(\frac{1+\varepsilon}{\varepsilon} \right)^{\frac{1}{m}}
\end{align*}
resulting in   
\begin{align*}
d_{lim}|_{t'}-t' & \overset{Eq.(\ref{eq:def_dlim})}{=} \max_{1\leq i \leq m}d^{i}_{lim}|_{t'} -t' \\
& \leq  \left(l(m_1)|_{t'}+l(m_2)|_{t'}\right) \cdot \left(\frac{1+\varepsilon}{\varepsilon} \right)^{\frac{1}{m}}.
\end{align*}
Next, we assume that Algorithm~\ref{alg:nonpreemptive} accepts job $J_j$ and that the claim holds before accepting $J_j$. Algorithm~\ref{alg:nonpreemptive} allocates $J_j$ to machine $m_k$. Re-indexing puts this machine at position $i\leq k$. For $k=1$, we either have $d_{lim}|_{J_j}=d_{lim}|_{J_j}'$ or $d^1_{lim}|_{J_j}'=d_{lim}|_{J_j}'$. We have already stated that the claim holds in the latter case while in the former case, the claim continues to hold since no machine load decreases. Therefore, we assume $k>1$ and $d^1_{lim}|_{J_j}'<d_{lim}|_{J_j}'$, that is, machine $m_h$ with $h>1$ determines $d_{lim}|_{J_j}'$.  

For $l(m_1)|_{J_j}+p_j+r_j\leq d_j$, Line 8 of Algorithm~\ref{alg:nonpreemptive} guarantees  
\begin{align*}
l(m_1)|_{J_j}'+l(m_2)|_{J_j}' & \geq l(m_1)|_{J_j}+p_j \\
& >(d_{lim}|_{J_j}'-r_j) \cdot \left(\frac{\varepsilon}{1+\varepsilon} \right)^{\frac{1}{m}}.
\end{align*}
For $l(m_1)|_{J_j}+p_j+r_j> d_j\geq d_{lim}|_{J_j}$, we have either $d_{lim}|_{J_j}=d_{lim}|_{J_j}'$ and the claim continues to hold or $i\leq h \leq k$ with $l(m_h)|_{J_j}'\leq l(m_{h-1})|_{J_j}$ and  
\begin{align*}
d_{lim}|_{J_j}'-r_j& =d^{h}_{lim}|_{J_j}'-r_j\overset{Eq.(\ref{eq:def_dilim})}{=}l(m_h)|_{J_j}'\cdot \left(\frac{1+\varepsilon}{\varepsilon} \right)^{\frac{h}{m}} \\
& \leq l(m_{h-1})|_{J_j}\cdot \left(\frac{1+\varepsilon}{\varepsilon} \right)^{\frac{h-1}{m}}\cdot \left(\frac{1+\varepsilon}{\varepsilon} \right)^{\frac{1}{m}}\\
& \overset{Eq.(\ref{eq:def_dlim})}{\leq} \left(d_{lim}|_{J_j}-r_j \right)\cdot \left(\frac{1+\varepsilon}{\varepsilon} \right)^{\frac{1}{m}}\\
& \leq  \left(l(m_1)|_{J_j}+p_j\right) \cdot \left(\frac{1+\varepsilon}{\varepsilon} \right)^{\frac{1}{m}} \\
& \leq \left(l(m_1)|_{J_j}'+l(m_2)|_{J_j}'\right)\cdot \left(\frac{1+\varepsilon}{\varepsilon} \right)^{\frac{1}{m}}.  
\end{align*}
\qed   
\end{proof}

Next, we prove a competitive ratio of Algorithm~\ref{alg:nonpreemptive}. To this end, we define busy and usable intervals of schedule $S$ generated by Algorithm~\ref{alg:nonpreemptive}: interval $[t,t')$ is busy if $t=r_j$ is the submission time of a job $J_j$ and we have no load on any machine after progressing to $t$ but before accepting job $J_j$. $t'$ is the first time after $t$ with no load on any machine after progressing to $t'$ or a later time. Note that the submission time of any rejected job must be within a busy interval. ${\mathcal{J}}_{r}$ is the set of all rejected jobs with submission times in interval $[t,t')$. Interval $[t,t'')$ is usable with $t''=\max\left\lbrace t', \max_{J_k\in{\mathcal{J}}_{r}}\{d_k\}\right\rbrace$. If rejected job $J_k$ determines $t''$ then we have $r_k\geq t$ and  
\begin{align}
\nonumber
t''-r_k & = d_k-r_k < d_{lim}|_{r_k}-r_k \\
\label{eq:schedule_volume}
& \overset{Eq.(\ref{eq:sum_lm})}{\leq} \left(l(m_1)|_{r_k}+l(m_2)|_{r_k}\right) \cdot \left(\frac{1+\varepsilon}{\varepsilon} \right)^{\frac{1}{m}} \\ \nonumber
& \leq \sum_{i=1}^m l(m_i)|_{r_k}\cdot \left(\frac{1+\varepsilon}{\varepsilon} \right)^{\frac{1}{m}}.
\end{align}
Using the definition of $P[t,t')(S,{\mathcal{J}})$ in Section~\ref{sec:pmtn}, we obtain 
\begin{align}
\nonumber
t''-t & = t''-r_k + r_k-t \\ \nonumber
& \overset{Eq.(\ref{eq:schedule_volume})}{\leq} P[r_k,t')(S,{\mathcal{J}})\cdot \left(\frac{1+\varepsilon}{\varepsilon} \right)^{\frac{1}{m}}+r_k-t\\
\label{eq:busyinterval}
& \leq P[t,t')(S,{\mathcal{J}})\cdot \left(\frac{1+\varepsilon}{\varepsilon} \right)^{\frac{1}{m}}.
\end{align}
Note that any time instance of schedule $S$ with at least one busy machine belongs to a busy interval  and it is not possible to execute a rejected job at a time instance that does not belong to any usable interval. Such a restriction does not exist for an accepted job. 
 
\begin{theorem}
\label{thm:nonpreemptive_online}
The $P_m|\varepsilon,\mbox{online}|\sum p_j\cdot(1-U_j)$ problem 
admits a deterministic online algorithm with competitive ratio at most 
\begin{align*}
m \cdot\left(\frac{1+\varepsilon}{\varepsilon}\right)^{\frac{1}{m}}+1.
\end{align*}
\end{theorem}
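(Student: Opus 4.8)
The plan is to partition the schedule $S$ produced by Algorithm~\ref{alg:nonpreemptive} into busy and usable intervals, exactly as set up just before the theorem, and bound the ratio of optimal to algorithmic processing time separately on two types of time regions: those covered by a usable interval and those that are not. The key structural facts I would lean on are: (i) no rejected job can be executed outside a usable interval (stated in the paragraph preceding the theorem), so on the complement of all usable intervals the optimal schedule can only run jobs the algorithm also accepted, giving a ratio of $1$ there; and (ii) inside a usable interval $[t,t'')$, Eq.~(\ref{eq:busyinterval}) gives $t''-t \le P[t,t')(S,{\mathcal{J}})\cdot\left(\frac{1+\varepsilon}{\varepsilon}\right)^{1/m}$, which controls the \emph{length} of the usable interval in terms of the algorithm's work on the busy part $[t,t')$.

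From there the main estimate is: on a usable interval $[t,t'')$, the optimal schedule earns at most $m\cdot(t''-t)$ (at most $m$ machines running for the interval length), plus whatever accepted jobs contribute — but accepted jobs are already counted in $P(S,\cdot)$. So I would write $P^*[t,t'')({\mathcal J}) \le m\,(t''-t) + P[t,t')(S,{\mathcal J})$, and then use Eq.~(\ref{eq:busyinterval}) to replace $m\,(t''-t)$ by $m\left(\frac{1+\varepsilon}{\varepsilon}\right)^{1/m} P[t,t')(S,{\mathcal J})$. Since $P[t,t')(S,{\mathcal J}) \le P[t,t'')(S,{\mathcal J})$, this yields
\[
\frac{P^*[t,t'')({\mathcal J})}{P[t,t'')(S,{\mathcal J})} \;\le\; m\left(\frac{1+\varepsilon}{\varepsilon}\right)^{\frac{1}{m}} + 1
\]
on each usable interval. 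Combining the per-interval bound on usable intervals with the trivial ratio $1$ on the complementary regions (and noting the busy intervals nest inside the usable ones, and distinct usable intervals are disjoint or can be merged), summing numerators and denominators over all pieces gives the claimed overall competitive ratio. I would phrase the aggregation via the elementary fact that if $a_i \le c\, b_i$ for all $i$ with $b_i \ge 0$, then $\sum a_i \le c \sum b_i$.

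The main obstacle I expect is the bookkeeping that makes the partition into usable intervals actually a partition: usable intervals defined from different busy intervals may overlap (because a rejected job's deadline $d_k$ can reach past the next busy interval's start), so one must argue they can be merged into maximal usable intervals without losing the bound — the length estimate Eq.~(\ref{eq:busyinterval}) has to be shown to survive merging, essentially because the work $P[t,t')(S,\cdot)$ on the union still dominates the sum of the individual busy-part workloads. A second, smaller subtlety is double-counting: accepted-job processing time lying inside a usable interval must be attributed to $P(S,\cdot)$ exactly once, which the inequality $P^*[t,t'') \le m(t''-t) + P[t,t')(S,\cdot)$ handles only if one is careful that the $m(t''-t)$ term is not itself already accounting for algorithm-accepted work; the clean way is to bound $P^* \le m(t''-t)$ outright (every machine busy at most the full length) and then separately observe $m(t''-t) \le m(\tfrac{1+\varepsilon}{\varepsilon})^{1/m}P[t,t') + 0$, with the ``$+1$'' coming from adding back the complement regions globally rather than interval-by-interval. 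I would reconcile these two accountings carefully in the write-up.
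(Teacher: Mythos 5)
Your proposal follows essentially the same route as the paper: the same busy/usable interval decomposition, the same key estimate $P^*[t,t'')\leq m\cdot(t''-t)+P[t,t')(S,\mathcal{J})$ (accepted jobs being the only ones OPT can run outside usable intervals), and the same application of Eq.~(\ref{eq:busyinterval}) to convert $t''-t$ into $\left(\frac{1+\varepsilon}{\varepsilon}\right)^{1/m}P[t,t')(S,\mathcal{J})$. The aggregation subtleties you flag are real but minor; the paper sidesteps them by reducing to a single busy interval, which your summing-numerators-and-denominators argument handles equivalently.
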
 

\begin{proof}
Since no two busy intervals overlap, we assume $S$ to be a single busy interval $[t=0,t')$ with the corresponding usable interval $[t=0,t''\leq \max_{J_j\in\mathcal{J}}\{d_j\})$. Then we have 
\begin{align*}
\frac{P^*[0,t'')}{P[0,t')(S,{\mathcal{J}})}& \leq
\frac{m\cdot t''+P[0,t')(S,{\mathcal{J}})}{P[0,t')(S,{\mathcal{J}})}\\
& \leq \frac{m\cdot t''}{P[0,t')(S,{\mathcal{J}})}+1\\
& \overset{Eq.(\ref{eq:busyinterval})}{\leq} \frac{m\cdot P[0,t')(S,{\mathcal{J}})\cdot\left(\frac{1+\varepsilon}{\varepsilon}\right)^{\frac{1}{m}}}{P[0,t')(S,{\mathcal{J}})}+1\\
& \leq m\cdot \left(\frac{1+\varepsilon}{\varepsilon}\right)^{\frac{1}{m}} +1.
\end{align*}
\qed
\end{proof}
The expression $m \cdot\left(\frac{1+\varepsilon}{\varepsilon}\right)^{\frac{1}{m}}+1$ has its minimal value $e\cdot\ln{\frac{1+\varepsilon}{\varepsilon}}+1 $ 
for $m = \ln{\frac{1+\varepsilon}{\varepsilon}}$. Since it decreases for $1\leq m \leq \ln{\frac{1+\varepsilon}{\varepsilon}}$ and increases for $m \geq \ln{\frac{1+\varepsilon}{\varepsilon}}$, we may obtain a better competitive ratio by partitioning the system in groups of approximately $\ln{\frac{1+\varepsilon}{\varepsilon}}$ machines. In particular, this approach yields the following result.
\begin{corollary}
For $\ln{\frac{1+\varepsilon}{\varepsilon}}$ being an integer and $m = i\cdot \ln{\frac{1+\varepsilon}{\varepsilon}}$, $i\in \mathbb{N}$, the $P_m|\varepsilon,\mbox{online}|\sum p_j\cdot(1-U_j)$ problem 
admits a deterministic online algorithm with competitive ratio at most 
\begin{align*}
e \cdot\ln\left(\frac{1+\varepsilon}{\varepsilon}\right)+1.
\end{align*}
\end{corollary}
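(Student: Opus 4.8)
The plan is to partition the $m$ machines into $i$ groups of $L:=\ln\frac{1+\varepsilon}{\varepsilon}$ machines (both $L\in\mathbb N$ and $m=iL$ by hypothesis) and run one independent copy of Algorithm~\ref{alg:nonpreemptive} on each group, coupled by a first-fit offering rule: a newly submitted job $J_j$ is presented in turn to the copies $B_1,\dots,B_i$; it is accepted and allocated by the first copy whose acceptance test $d_j\ge d_{lim}$ succeeds on its $L$ machines, and it is rejected globally only if every copy declines. Let $\mathcal J_g$ denote the subsequence of jobs offered to $B_g$, i.e.\ those not accepted by $B_1,\dots,B_{g-1}$. Then each $B_g$ performs a genuine online run of Algorithm~\ref{alg:nonpreemptive} on $L$ machines with input $\mathcal J_g$, so Lemma~\ref{lem:legal_allocation} applied to every group yields a legal global schedule, and the accepted volume is $\mathrm{ALG}(\mathcal J)=\sum_{g=1}^{i}\mathrm{ALG}_g(\mathcal J_g)$. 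All results of Section~\ref{sec:nonpreemptive} up to and including Theorem~\ref{thm:nonpreemptive_online} may then be invoked for each $B_g$ with $m$ replaced by $L$.

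Next I would split the optimum. Let $A$ be the set of jobs accepted by the combined algorithm and $R=\mathcal J\setminus A$ the globally rejected jobs, and for a job set $X$ write $\mathrm{OPT}(X)$ for the maximum total processing time of a valid $m$-machine schedule for $X$. Since $A$ is actually scheduled by the algorithm on the $m$ machines, $\mathrm{OPT}(A)=\sum_{J_j\in A}p_j=\mathrm{ALG}(\mathcal J)$; splitting the jobs accepted by an optimal schedule for $\mathcal J$ according to membership in $A$ or $R$ gives $\mathrm{OPT}(\mathcal J)\le\mathrm{OPT}(A)+\mathrm{OPT}(R)=\mathrm{ALG}(\mathcal J)+\mathrm{OPT}(R)$. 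It therefore suffices to prove $\mathrm{OPT}(R)\le eL\cdot\mathrm{ALG}(\mathcal J)$.

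For this I would use the busy/usable interval structure of the copies. By the first-fit rule, $R\subseteq\mathcal J_g$ for every $g$, so every job of $R$ is a rejected job of $B_g$; hence, by the definition of usable intervals preceding Theorem~\ref{thm:nonpreemptive_online}, the window $[r_k,d_k)$ of each $J_k\in R$ lies in a single usable interval of $B_g$'s schedule $S_g$, and thus in the union $\mathcal U_g$ of all usable intervals of $S_g$. Consequently $\bigcup_{J_k\in R}[r_k,d_k)\subseteq\mathcal U:=\bigcap_{g=1}^{i}\mathcal U_g$, so any offline schedule can run at most $m\cdot|\mathcal U|$ units of $R$, i.e.\ $\mathrm{OPT}(R)\le m\,|\mathcal U|$. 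Now $|\mathcal U|\le\min_g|\mathcal U_g|\le\frac1i\sum_g|\mathcal U_g|$, and the $L$-machine form of inequality~(\ref{eq:busyinterval}), summed over the disjoint busy intervals of $S_g$, gives $|\mathcal U_g|\le\bigl(\tfrac{1+\varepsilon}{\varepsilon}\bigr)^{1/L}\mathrm{ALG}_g(\mathcal J_g)$. Combining,
\[
\mathrm{OPT}(R)\le\frac{m}{i}\Bigl(\tfrac{1+\varepsilon}{\varepsilon}\Bigr)^{1/L}\mathrm{ALG}(\mathcal J)=L\Bigl(\tfrac{1+\varepsilon}{\varepsilon}\Bigr)^{1/L}\mathrm{ALG}(\mathcal J),
\]
and since $L=\ln\frac{1+\varepsilon}{\varepsilon}$ we have $\bigl(\tfrac{1+\varepsilon}{\varepsilon}\bigr)^{1/L}=e$, which is exactly where the constant $e$ appears. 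Plugging back yields $\mathrm{OPT}(\mathcal J)\le(1+eL)\,\mathrm{ALG}(\mathcal J)=\bigl(e\ln\tfrac{1+\varepsilon}{\varepsilon}+1\bigr)\mathrm{ALG}(\mathcal J)$, as claimed.

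The step I expect to be the main obstacle is the heart of the third paragraph: arguing cleanly that a globally rejected job is "rejected in the eyes of every copy" — which rests on the first-fit order and on $\mathcal J_g$ being a legitimate online input for $B_g$ — and then that the usable intervals of the copies can be intersected, so that $R$ is confined to $\mathcal U$ and contributes at most $m|\mathcal U|$ to any offline schedule, together with the averaging $|\bigcap_g\mathcal U_g|\le\frac1i\sum_g|\mathcal U_g|$ that turns the per-group bound into the factor $\frac mi=L$. Everything else is a routine transcription of Theorem~\ref{thm:nonpreemptive_online} and its supporting lemmas to $L$ machines and the observation $\bigl(\tfrac{1+\varepsilon}{\varepsilon}\bigr)^{1/L}=e$.
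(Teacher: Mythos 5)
Your proposal is correct and takes the same route as the paper: partition the machines into groups of $\ln\frac{1+\varepsilon}{\varepsilon}$, run Algorithm~\ref{alg:nonpreemptive} on each group, and cascade every job not accepted by one group to the next, so that the bound of Theorem~\ref{thm:nonpreemptive_online} with $m=\ln\frac{1+\varepsilon}{\varepsilon}$ applies per group. The paper's proof stops at ``hence the competitive ratio carries over,'' whereas you supply the actual combination step (the split $\mathrm{OPT}\le\mathrm{OPT}(A)+\mathrm{OPT}(R)$, the confinement of globally rejected jobs to the intersection of the groups' usable intervals, and the averaging $|\bigcap_g\mathcal U_g|\le\frac1i\sum_g|\mathcal U_g|$); that step is sound and is exactly what is needed to avoid losing a factor of $i$.
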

\begin{proof}
We partition the machines into sets of machines with size $\ln{\frac{1+\varepsilon}{\varepsilon}}$. For each set of machines, we run Algorithm~\ref{alg:nonpreemptive} and obtain a competitive ratio of at most $e\cdot\ln{\frac{1+\varepsilon}{\varepsilon}}+1$ for the set of jobs allocated to these machines. We consider any job not allocated to these machines on the next set of machines and repeat the procedure. Hence the competitive ratio of Theorem~\ref{thm:nonpreemptive_online} with $m=\ln \frac{1+\varepsilon}{\varepsilon}$ carries over.
\end{proof}

For $m=1$, Theorem~\ref{thm:nonpreemptive_online} produces the competitive ratio of the single machine case although Algorithm~\ref{alg:nonpreemptive} does not greedily accept every job that we can schedule: in particular, it rejects jobs with a deadline below the threshold deadline even if their small processing time allows a valid schedule. This rejection has no influence on the competitive ratio. 

For $m>1$ and $\frac{1+\varepsilon}{\varepsilon}\leq e \Leftrightarrow\varepsilon > 0.58$, greedy acceptance has a better competitive ratio than Algorithm~\ref{alg:nonpreemptive} while the competitive ratio of Algorithm~\ref{alg:nonpreemptive} is clearly better for smaller values of $\varepsilon$ and $m>1$. 

We also briefly point out how to use our deterministic algorithm for parallel machines to obtain a randomized algorithm for small values of $\varepsilon$.
\begin{corollary}
The $1|\varepsilon,\mbox{online}|\sum p_j\cdot(1-U_j)$ problem admits a randomized online algorithm against an oblivious adversary with competitive ratio at most
\begin{align*}
\min_{m\in \left\lbrace \left\lfloor\ln\left(\frac{1+\varepsilon}{\varepsilon}\right) \right\rfloor,\left\lceil\ln\left(\frac{1+\varepsilon}{\varepsilon}\right) \right\rceil \right\rbrace} \left\lbrace m^2 \cdot \left(\frac{1+\varepsilon}{\varepsilon}\right)^{\frac{1}{m}}+ m \right\rbrace 
\end{align*}
if $\varepsilon>0$ is sufficiently small.
\end{corollary}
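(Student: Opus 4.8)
\subsection*{Proof proposal}

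The plan is to run the deterministic $m$-machine non-preemptive algorithm of Theorem~\ref{thm:nonpreemptive_online} as a black box while physically operating only one of its $m$ virtual machines, chosen uniformly at random. Concretely, fix a positive integer $m$ (to be optimized at the end), draw $i\in\{1,\dots,m\}$ uniformly at random before any job arrives, and simulate Algorithm~\ref{alg:nonpreemptive} for $m$ machines internally on the entire input sequence $\mathcal{J}$. The randomized single-machine algorithm accepts a job $J_j$ if and only if the simulated $m$-machine algorithm accepts $J_j$ and allocates it to virtual machine number $i$; the physical machine then runs exactly the jobs allocated to virtual machine $i$, each as early as possible and in the order prescribed by the simulation. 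The first thing I would verify is feasibility: by Lemma~\ref{lem:legal_allocation} the simulated algorithm produces a legal non-preemptive $m$-machine schedule, so the restriction of that schedule to virtual machine $i$ is a legal single-machine schedule, and the physical machine meets every deadline of an accepted job.

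The second, and conceptually most important, step is to observe that the acceptance and allocation decisions taken inside the simulation depend only on $\mathcal{J}$ and on the loads $l(m_1),\dots,l(m_m)$ of all virtual machines, never on $i$; hence the simulated trajectory is a fixed deterministic function of $\mathcal{J}$, and the random index $i$ only selects which virtual machine we keep. This is precisely why the statement is against an \emph{oblivious} adversary: an adaptive adversary, observing which simulation-accepted jobs we nonetheless reject, would progressively learn $i$, whereas an oblivious adversary must commit to $\mathcal{J}$ in advance. I expect this bookkeeping — that ``simulate all $m$ machines, keep one'' makes the single machine's externally visible decisions a deterministic function of $(\mathcal{J},i)$ while still yielding a valid schedule — to be the only genuinely delicate point; everything after it is routine.

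Next comes the expectation estimate. Let $\text{ALG}_m(\mathcal{J})$ denote the total processing time accepted by the simulated $m$-machine algorithm; since this equals the sum of the final loads of the $m$ virtual machines and $i$ is uniform, linearity gives $\mathbb{E}\big[\text{ALG}_1(\mathcal{J})\big]=\frac{1}{m}\,\text{ALG}_m(\mathcal{J})$. By Theorem~\ref{thm:nonpreemptive_online}, $\text{ALG}_m(\mathcal{J})\ge \text{OPT}_m(\mathcal{J})\big/\big(m(\frac{1+\varepsilon}{\varepsilon})^{1/m}+1\big)$, where $\text{OPT}_m$ is the optimal offline value on $m$ machines; and since any valid single-machine schedule is also a valid $m$-machine schedule, $\text{OPT}_m(\mathcal{J})\ge \text{OPT}_1(\mathcal{J})$. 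Chaining these inequalities yields $\mathbb{E}\big[\text{ALG}_1(\mathcal{J})\big]\ge \text{OPT}_1(\mathcal{J})\big/\big(m(m(\frac{1+\varepsilon}{\varepsilon})^{1/m}+1)\big)$, i.e. a competitive ratio of at most $m^2(\frac{1+\varepsilon}{\varepsilon})^{1/m}+m$ for every positive integer $m$.

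Finally I would optimize the free parameter $m$. For $\varepsilon$ small enough that $\lfloor\ln\frac{1+\varepsilon}{\varepsilon}\rfloor\ge 1$, both $\lfloor\ln\frac{1+\varepsilon}{\varepsilon}\rfloor$ and $\lceil\ln\frac{1+\varepsilon}{\varepsilon}\rceil$ are admissible choices of $m$; taking whichever of the two gives the smaller value of $m^2(\frac{1+\varepsilon}{\varepsilon})^{1/m}+m$ produces exactly the claimed minimum. The choice $m\approx\ln\frac{1+\varepsilon}{\varepsilon}$ is the natural one because it already minimizes the deterministic factor $m(\frac{1+\varepsilon}{\varepsilon})^{1/m}+1$ (as in the preceding corollary), in which case $(\frac{1+\varepsilon}{\varepsilon})^{1/m}=e$ and the bound collapses to $e\big(\ln\frac{1+\varepsilon}{\varepsilon}\big)^2+\ln\frac{1+\varepsilon}{\varepsilon}$ up to the rounding of $m$ to an integer, which explains the ``sufficiently small $\varepsilon$'' hypothesis.
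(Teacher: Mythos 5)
Your proposal is correct and follows essentially the same route as the paper: simulate the deterministic $m$-machine Algorithm~\ref{alg:nonpreemptive}, keep one virtual machine chosen uniformly at random, and combine $\mathbb{E}[ALG]=\frac{1}{m}ALG_m$ with Theorem~\ref{thm:nonpreemptive_online} and $OPT_m\geq OPT_1$ before optimizing $m$ near $\ln\frac{1+\varepsilon}{\varepsilon}$. Your extra care in fixing the random index before any job arrives and noting why the adversary must be oblivious is a welcome clarification of a point the paper leaves implicit, but it does not change the argument.
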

\begin{proof}
We simulate our deterministic Algorithm~\ref{alg:nonpreemptive} with $m\in \left\lbrace \left\lfloor\ln\left(\frac{1+\varepsilon}{\varepsilon}\right) \right\rfloor,\left\lceil\ln\left(\frac{1+\varepsilon}{\varepsilon}\right) \right\rceil \right\rbrace$ machines. 
Denote by $ALG_{m}$ and $OPT_{m}$ the obtained utilization and the optimal utilization on these machines, respectively. Then we have 
\begin{align*}
m\cdot\left(\frac{1+\varepsilon}{\varepsilon}\right)^{\frac{1}{m}}+1& \geq  \frac{OPT_{m}}{ALG_{m}}\geq \frac{OPT}{ALG_{m}}.
\end{align*}
We now choose one of the simulated machines uniformly at random which, on expectation, has exactly an $1/m$ fraction of the utilization of the simulated schedule, that is $\mathbb{E}[ALG] = \frac{1}{m} ALG_{m}$ resulting in $m^2\cdot\left(\frac{1+\varepsilon}{\varepsilon}\right)^{\frac{1}{m}}+m \geq \frac{OPT}{\mathbb{E}[ALG]}$.
\qed
\end{proof}
For $\ln\left(\frac{1+\varepsilon}{\varepsilon}\right)$ being integral, we obtain the competitive ratio $\ln\left(\frac{1+\varepsilon}{\varepsilon}\right)\cdot \left(\ln\left(\frac{1+\varepsilon}{\varepsilon}\right)\cdot e+1 \right)$.

We note that a similar technique cannot be applied in the preemptive case, as we cannot independently treat the machines.

\subsection*{Lower Bounds}

For the $P_m|\varepsilon,\mbox{online}|\sum p_j\cdot(1-U_j)$ problem, no lower bound of the competitive ratio is known. In comparison to the single machine problems, it is more difficult to determine a lower bound for the non-preemptive problem than for the preemptive problem since any non-preemptive scheduling algorithm faces restrictions in the machine space that are not present in the preemptive problem. Looking back at the lower bound instance in the preemptive case, we realize that the generated schedules are non-preemptive. Therefore, we can use this approach to determine a lower bound although the common submission time in this approach prevents us from achieving the increase  of the lower bound in the proof of Goldwasser~\cite{Gol03} over the lower bound for the single machine preemptive problem. Hence, we cannot expect to obtain a tight lower bound with this approach. Moreover, the optimal schedule in the proof of Theorem~\ref{thm:lbound_prmp} uses $\lfloor m\cdot (1+\varepsilon) \rfloor$ jobs of a given length and tight slack to cover the interval $[0,p_j\cdot (1+\varepsilon ))$ as well as possible. For $\varepsilon<1$, we can schedule only $m$ of these jobs without applying preemption.

Therefore, we use a slightly modified approach for small $\varepsilon$. The adversary first submits a single job $J_1$ with $r_1=0$, $p_1=1$, and a sufficiently large deadline. An online algorithm must accept this job to prevent an infinite competitive ratio. We assume starting time $t$ of $J_1$. Then the adversary submits $m-1$ different groups of up to $m$ identical jobs with submission time $t$ and a tight slack such that we must accept one job of each group to prevent the target competitive ratio or an even larger competitive ratio. The adversary selects the job parameters such that each job must execute on a separate machine. After we have accepted one job of a group, the adversary continues with the submission of a job of the next group. Finally the adversary submits $m$ jobs with submission time $t$ and processing time $\frac{1}{\varepsilon}-\delta$ for an arbitrary small $\delta>0$. As in the approach of Theorem~\ref{thm:lbound_prmp}, we are not able to schedule any job of the final group.  

\begin{theorem}
\label{thm:lbound_nonprmp}
For the $P_m|\varepsilon,\mbox{online}|\sum p_j\cdot(1-U_j)$ problem with $\varepsilon< 1$, any deterministic online algorithm is at least strictly $c$-competitive with  
\begin{align*}
\frac{c}{m} &= \left( \frac{m}{(c-1)\cdot \varepsilon}\right)^{\frac{1}{m-1}}-1.
\end{align*}
\end{theorem}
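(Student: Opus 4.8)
The plan is to instantiate the adversary described just before the statement (in the spirit of the proof of Theorem~\ref{thm:lbound_prmp}) with explicit parameters and to let the stopping branches determine $c$. Write $\rho:=1+\tfrac{c}{m}$. When the online algorithm accepts $J_1$ it must commit a start time $t$, and every subsequent job is released at $t$; hence from that moment on one machine is blocked by $J_1$ on $[t,t+1)$ and cannot be shifted. I would fix the processing time $p_k$ of the (identical, tight-slack) jobs of group $k$, $2\le k\le m$, by the demand that rejecting the whole of group $k$ — after which the adversary stops — already forces ratio $c$. Since $\varepsilon<1$ makes the scheduling window $(1+\varepsilon)p_k$ of such a job shorter than $2p_k$, an optimal schedule runs at most one group-$k$ job per machine, so it runs $m$ of them and still appends $J_1$ to some machine: $\mathrm{OPT}\ge m\,p_k+1$. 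At that point the algorithm has accepted only $J_1$ and one job of each of groups $2,\dots,k-1$, i.e. $1+\sum_{i=2}^{k-1}p_i$. Equating the quotient to $c$ gives $p_2=\tfrac{c-1}{m}$ and, subtracting consecutive instances, $m(p_{k+1}-p_k)=c\,p_k$, so $p_k=\tfrac{c-1}{m}\rho^{k-2}$.

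The next and most delicate step is to argue that, whatever machines and start times it picks, the algorithm is boxed into the configuration ``$J_1$ alone on one machine, at most one accepted group job on each of the others.'' Two group jobs of sizes $p_k<p_{k'}$ cannot share a machine (that would need $p_k\le\varepsilon p_{k'}$, but $p_{k'}/p_k=\rho^{k'-k}\ge\rho$ and $\rho^{m-2}<1/\varepsilon$), and no group job fits on $J_1$'s machine (that would need $p_k\ge 1/\varepsilon$, but $p_k\le p_m=\tfrac{1}{\varepsilon\rho}<\tfrac1\varepsilon$). These facts, and the inequality $p_2\ge 1$ used below, all follow from $c-1\ge m$, which is where the hypothesis that $\varepsilon$ is small is needed. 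So once the algorithm has accepted one job of each of the $m-1$ groups, every one of the $m$ machines carries remaining load at least $1$. When the adversary releases the final group of $m$ identical tight-slack jobs of size $\tfrac1\varepsilon-\delta$, each can start no later than $t+\varepsilon(\tfrac1\varepsilon-\delta)=t+1-\varepsilon\delta$, hence fits on none of the $m$ machines: the algorithm must reject the final group. The optimum instead puts one final job on each machine starting at $t$ and $J_1$ afterwards, so $\mathrm{OPT}\ge m(\tfrac1\varepsilon-\delta)+1$ while the algorithm keeps $1+\sum_{k=2}^{m}p_k$.

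It remains to convert ``final-branch ratio $=c$'' into the stated equation. With $\rho-1=\tfrac cm$ the geometric sum gives $1+\sum_{k=2}^{m}p_k=1+\tfrac{c-1}{c}\bigl(\rho^{m-1}-1\bigr)$, so requiring $\mathrm{OPT}/\mathrm{ALG}\ge c$ in this branch and letting $\delta\to0$ reduces to $\tfrac m\varepsilon\ge(c-1)\rho^{m-1}$; taking $c$ as large as possible makes this an equality, $\rho^{m-1}=\tfrac{m}{(c-1)\varepsilon}$, i.e. $\tfrac cm=\bigl(\tfrac{m}{(c-1)\varepsilon}\bigr)^{1/(m-1)}-1$. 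With $c$ the unique root of this equation the chosen $p_k$ make every stopping branch have ratio exactly $c$ and the final-branch ratio tend to $c$ as $\delta\to0$, which gives the lower bound. I expect the genuine difficulty to be the ``boxing-in'' argument of the second paragraph together with checking that the optimal schedules invoked everywhere respect the slack condition; the limit $\delta\to0$, the geometric bookkeeping, and the floor-free form of the bound (a feature of $\varepsilon<1$, where only $m$ tight-slack copies of a job fit) are routine.
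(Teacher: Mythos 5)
Your proposal is correct and follows essentially the same route as the paper: the same adversary (a unit job $J_1$, then $m-1$ groups of tight-slack jobs released at its start time $t$, then $m$ jobs of length $\tfrac{1}{\varepsilon}-\delta$), the same geometric recurrence $p_j = p_{j-1}\cdot\left(1+\tfrac{c}{m}\right)$ with $p_2=\tfrac{c-1}{m}$ and $p_{m+1}=\tfrac{1}{\varepsilon}$, and the same defining equation for $c$. Your explicit caveat that the boxing-in argument needs $p_2\geq 1$ (i.e.\ $c\geq m+1$, hence $\varepsilon$ sufficiently small) is a legitimate point that the paper leaves implicit in its requirement $1<p_2<\tfrac{1}{\varepsilon}$.
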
 

\begin{proof}
For $\varepsilon<1$, we cannot allocate any two jobs $J_j$ and $J_k$ to the same machine if both jobs have a tight slack, $r_j=r_k=t$ and $1\leq p_j,p_k< \frac{1}{\varepsilon}$. Since job $J_1$ starts at time $t$, we can include it into this group of jobs, that is, we cannot allocate any other job of this group to the same machine as job $J_1$.

Assume that we want to test whether a competitive ratio $c$ is a strict lower bound. Then we must show that there is a set of increasing processing times $p_1=1< p_2<  \ldots < p_{m}< p_{m+1}<\frac{1}{\varepsilon}$ such that we have 
\begin{align}
\label{eq:comp_ratio}
1+m\cdot p_{j} & \geq c\cdot \sum_{i=1}^{j-1}p_i\;\; \mbox{ with } \;\; 1<j\leq m+1.
\end{align}
Eq.~(\ref{eq:comp_ratio}) assumes that we have accepted exactly one job from block 1 to block $j-1$ but no job from block $j$. As already mentioned, we can only schedule the accepted jobs using exactly $j-1$ machines. In the optimal schedule, we execute job $J_1$ such that it either completes before $t$ or starts at $t+\frac{1}{\varepsilon}$ while we allocate $m$ jobs with processing time $p_j$  to a separate machine each starting at time $t$. Since the right hand side of Eq.~(\ref{eq:comp_ratio}) is increasing with increasing $j$, there is no benefit in considering a processing time $p_j$ that is not larger than processing time $p_{j-1}$. Otherwise we can reduce the processing time $p_{j-1}$ and obtain an larger value $c$.

To minimize $c$, we assume a value $p_2$ with $1<p_2<\frac{1}{\varepsilon}$ and $p_{m+1}=\frac{1}{\varepsilon}$. We justify the second assumption by the fact that the difference between $p_{m+1}$ and $\frac{1}{\varepsilon}$ is arbitrarily small. We transform Eq.~(\ref{eq:comp_ratio}) into a set of equations and then subtract the equation for $p_{j-1}$ from the equation for $p_j$:
\begin{align*}
p_2 & = \frac{c-1}{m} \\ 
p_{j} & = p_{j-1}\cdot \left(\frac{c}{m} + 1 \right)\;\; \mbox{ for } \;\; 2<j \leq m+1 \\ 
p_{m+1} & = \frac{1}{\varepsilon}
\end{align*}
Then we must solve the equation $\frac{c}{m}=\left( \frac{m}{(c-1)\cdot \varepsilon}\right)^{\frac{1}{m-1}}-1$. The condition $p_2=\frac{c-1}{m} $is necessary to obtain the largest value for $c$ and guarantee the validity of Eq.~(\ref{eq:comp_ratio}) for $j=2$. 
\qed
\end{proof}
For large values of $m$, we obtain approximately $\frac{c}{m}=\left(\frac{1}{\varepsilon} \right)^{\frac{1}{m}}$.

For $m=1$, the lower bound of Theorem~\ref{thm:lbound_nonprmp} becomes $1+\frac{1}{\varepsilon}$, that is the lower bound of the corresponding preemptive problem. Theorems~\ref{thm:nonpreemptive_online} and \ref{thm:lbound_nonprmp} show that there is a gap  between the lower bound of the competitive ratio and the competitive ratio of Algorithm~\ref{alg:nonpreemptive}. Therefore, the lower bound of Theorem~\ref{thm:lbound_nonprmp} is not the best possible lower bound for this problem. Deriving a better lower bound that is applicable to all values of $m$ remains a challenging open problem.

\begin{acknowledgement}

This work was partially funded by ERC Advanced Grant 788893 AMDROMA.

\end{acknowledgement}

\bibliographystyle{spmpsci}
\bibliography{schwiegelshohn}

\end{document}